\newtheorem{theorem}{Theorem}
\newtheorem{lemma}{Lemma}
\newtheorem{definition}{Definition}
\newtheorem{remark}{Remark}
\newtheorem{proposition}{Proposition}
\newtheorem{assumption}{Assumption}
\newtheorem{corollary}{Corollary}
\newcommand{\tabincell}[2]{\begin{tabular}{@{}#1@{}}#2\end{tabular}}  
\def\BibTeX{{\rm B\kern-.05em{\sc i\kern-.025em b}\kern-.08em
    T\kern-.1667em\lower.7ex\hbox{E}\kern-.125emX}}
\begin{document}

\title{Resilient Average Consensus with Adversaries via Distributed Detection and Recovery}
\author{Liwei Yuan, \IEEEmembership{Member, IEEE} and Hideaki Ishii, \IEEEmembership{Fellow, IEEE}
\thanks{This work was supported in the part by JSPS under Grant-in-Aid for 
	Scientific Research Grant No.~22H01508, and in the part by the Yuelushan Center for Industrial Innovation under Grant 2023YCII0102.
	}
\thanks{L. Yuan is with the College of Electrical and Information Engineering, Hunan University, Changsha, 410082, China (e-mail: yuanliwei@hnu.edu.cn). }
\thanks{H. Ishii is with the Department of Information Physics and Computing, The University of Tokyo, Tokyo, 113-8656, Japan (e-mail: hideaki\_ishii@ipc.i.u-tokyo.ac.jp). }
}

\maketitle

\begin{abstract}
We study the problem of resilient average consensus in multi-agent systems where some of the agents are subject to failures or attacks. The objective of resilient average consensus is for non-faulty/normal agents to converge to the average of their initial values despite the erroneous effects from malicious agents. To this end, we propose a successful distributed \textit{iterative} resilient average consensus algorithm for the multi-agent networks with general \textit{directed} topologies.
The proposed algorithm has two parts at each iteration: detection and averaging. For the detection part, we propose two distributed algorithms and one of them can detect malicious agents with only the information from direct in-neighbors.
For the averaging part, we extend the applicability of an existing averaging algorithm where normal agents can remove the effects from malicious agents so far, after they are detected.
Another important feature of our method is that it can handle the case where malicious agents are neighboring and collaborating with each other to mislead the normal ones from averaging. This case cannot be solved by existing detection approaches in related literature. Moreover, our algorithm is efficient in storage usage especially for large-scale networks as each agent only requires the values of neighbors within two hops. Lastly, numerical examples are given to verify the efficacy of the proposed algorithms.
\end{abstract}

\begin{IEEEkeywords}
Average consensus, directed topologies, distributed detection, resilient consensus.
\end{IEEEkeywords}

\section{Introduction}
\IEEEPARstart{D}{istributed}
consensus in multi-agent systems is a fundamental and well-studied topic across different research areas including systems control, computer science, and communication \cite{olfati2007consensus,Lynch, bullo2009distributed,mesbahi2010graph}. 
Under this broad topic, a particular problem that has been extensively studied is that of average consensus where agents try to reach consensus on the average of their values through local interactions among nearby agents \cite{aysal2008distributed,olshevsky2009convergence,cai2011quantized,cai2012average,seyboth2013event,hadjicostis2015robust,chamie2016design,zhu2010discrete,montijano2014robust}. Average consensus algorithms are also useful to maintain the total of the resources invariant and have found applications in, e.g., economic dispatch problems for power systems \cite{yang2013consensus}, distributed computation of PageRank for the search engine of Google \cite{ishii2014pagerank,ishii2018distributed}.
As concerns for cyber-security sharply rise in our society, consensus protocols that properly function even in the presence of faults and adversarial agents have been actively studied; see, e.g., \cite{pasqualetti2012consensus,fagiolini2009dynamic,leblanc2013resilient,yuan2023event,sundaram2018distributed}. The objective is for the non-faulty/normal agents to reach consensus without being affected by the misbehaviors of adversarial agents. In this context, resilient algorithms for performing average consensus have remained somewhat limited except for the recent works \cite{dibaji2019resilient,zheng2021accurate,hadjicostis2022trustworthy}.
A major challenge is that, normal agents should reach consensus on the exact average of their initial values despite adversarial agents' misbehaviors, which may include adding erroneous values to the normal agents' values during the interactions with normal neighbors.

In this paper, we propose an iterative distributed algorithm to tackle the resilient average consensus (RAC) problem in general directed networks under the attacks by the so-called \textit{malicious} agents. Such an agent is capable to send arbitrary but identical values to its neighbors at each iteration \cite{sundaram2011distributed,teixeira2012attack,leblanc2013resilient}. This is the typical way of communication in broadcast networks \cite{goldsmith2005wireless}. 
There are basically two types of approaches for handling the resilient consensus problem, where normal agents need to reach a common value but not necessarily the average of the initial values: (i) mean subsequence reduced (MSR) algorithms \cite{azadmanesh2002asynchronous,vaidya2012iterative,leblanc2013resilient,su2017reaching,dibaji2018resilient,yuan2021resilient,yuan2022asynchronous,senejohnny2019resilience,ishii2022overview} and (ii) detection and isolation algorithms \cite{zhao2018resilient,yuan2021secure}. In MSR algoirthms, agents utilize only the values in a time-varying safety interval to update their next values, with no capabilities to recognize whether a neighbor is adversarial or not.
On the other hand, in detection and isolation algorithms, agents detect the neighbors violating the given consensus protocol and remove the values of such neighbors for updating their next values. This property makes the detection approach a good basis for our RAC algorithm. The reason is that the information of identities of normal agents must be known by the algorithms, which is the key to accumulate the values of normal agents for averaging.

\begin{table}[t]
	\begin{center}
		\caption{Comparisons with Related Resilient Averaging Works}
		\setlength{\tabcolsep}{2mm}{
			\begin{tabular}{lllll} 
				\toprule

				& Algorithm 3 & \cite{hadjicostis2023trustworthy} & \cite{zheng2021accurate} & \cite{dibaji2019resilient}  \\[0.5ex]  
				\midrule
				
				\tabincell{l}{Network\\type}&  Directed& Undirected& Undirected& Directed\\

				\midrule
				
				\tabincell{l}{Adversary\\type} &  Malicious & Malicious& Malicious&  Byzantine\\
				
				\midrule
				
				\tabincell{l}{Neighboring \\adversaries} &  Yes & No & No & Yes\\
				
				\midrule
				
				\tabincell{l}{Communication \\range} &  Two-hop & Two-hop & Two-hop & Flooding\\
				
				\bottomrule
		\end{tabular}}
		\label{table1}
	\end{center}
	\vspace{-7mm}
\end{table}

Our RAC algorithm is based on the detection approach and has two parts: detection and averaging. Existing related works for the RAC problem share this structure \cite{dibaji2019resilient,zheng2021accurate,hadjicostis2023trustworthy}. However, our method has certain advantages over them in different aspects as listed in Table I. More specifically, the work \cite{dibaji2019resilient} proposed a secure broadcast and retrieval algorithm for the RAC problem in directed networks. There, each normal agent uses a certified propagation algorithm to broadcast its initial value to all agents and retrieve the initial values of normal agents for averaging. This approach would cost a huge amount of storage and time for collecting the values of all normal agents in a large-scale network. 
The work \cite{zheng2021accurate} proposed a detection and compensation algorithm for the RAC problem in undirected networks. It utilizes the two-hop neighbors' information to detect misbehaving neighbors and it requires a doubly stochastic adjacent matrix for averaging. As a result, their algorithm is applied in undirected networks only and also cannot handle the case where malicious agents are neighboring with each other. Recently, the authors of \cite{hadjicostis2022trustworthy} proposed an RAC algorithm for directed networks. It allows normal agents to dynamically remove or add the values received from neighbors, however, with the assumption that each normal agent can have access to a correct detection of neighbors. Then in \cite{hadjicostis2023identification,hadjicostis2023trustworthy}, the same authors proposed a detection and compensation algorithm for RAC problem in undirected networks. However, their detection requires the direct communication with two-hop neighbors and it cannot handle the case of neighboring malicious agents either.

In \cite{yuan2021secure}, we proposed a secure detection algorithm for resilient consensus, where each normal agent acts as a detector of its neighbors. An important feature is that it can guarantee the fully distributed detection of malicious neighbors in general directed networks. Besides, it is able to tackle the case of neighboring malicious agents.
This is accomplished through the majority voting \cite{blahut1983theory,parhami1994voting} under a certain topology requirement on the network. In this paper, we exploit these properties and develop a novel RAC algorithm based on the two-hop detection approach.

The contributions of this paper are summarized as follows. We propose a novel RAC algorithm under which normal agents can iteratively detect malicious neighbors and converge to the average of their initial values in general directed networks. 
The proposed algorithm consists of the detection part and the averaging part.
Specifically, for the averaging part, we employ the running-sum based algorithm from \cite{hadjicostis2022trustworthy}, where each node has local buffers to store the total effects received from its in-neighbors. It allows the normal nodes to precisely recover from the influence of malicious neighbors once any misbehavior is detected.
We also improve the applicability of the averaging algorithm by relaxing the necessary assumptions in \cite{hadjicostis2022trustworthy}. In particular, it is sufficient for normal agents to access the correct detection of only in- and out-neighbors for our RAC algorithm, which can save storage resources. Furthermore, we extend the class of misbehaviors of the malicious nodes and consider scenarios where they may go beyond manipulating their identities and also remain to act normally at all times.

For the detection part, we propose two novel algorithms which allow normal nodes to monitor their neighbors and detect as soon as malicious agents perform any misbehaviors in the messages that they broadcast. The fundamental idea is to exploit the two-hop communication so that the normal agents have access to the inputs of their neighbors. This will enable them to obtain multiple reconstructed versions of the outputs of their neighbors and then to compare them to find the true outputs. The difference between the two algorithms lies in the levels in the capabilities for the normal agents to share the detection information among themselves. The first algorithm assumes the availability of authenticated mobile detectors, which help to reduce the requirement on the network connectivity. It will be referred to as the sharing detection algorithm.
Our second algorithm is more significant in that it can be implemented in a fully distributed fashion in our RAC algorithm. Here, each normal node is able to acquire all the inputs of an in-neighbor through the majority voting under a necessary graph structure. Besides, it obtains the detection information of any two-hop in-neighbor (in-neighbor's in-neighbor) by the same approach. As a result, normal nodes can independently detect all the malicious neighbors violating the given averaging algorithm in general \textit{directed} networks. 

Both detection algorithms can handle the case of neighboring malicious nodes, which cannot be solved by related works for the RAC problem \cite{zheng2021accurate,hadjicostis2023identification,hadjicostis2023trustworthy}. 
Moreover, we provide tight graph conditions for our algorithms to achieve the detection and averaging functions, respectively.
We also prove that the graph condition for the fully distributed detection algorithm can be simplified for undirected networks, which makes it more convenient to check whether a graph meets the condition or not.
We emphasize that although the topology requirement may be dense, we can generate the directed/undirected network topologies that satisfy our conditions in large scale.
Lastly, we provide extensive examples to show the efficacy of our RAC algorithm in large-scale networks as well as in an extreme adversarial situation, where over half of the nodes in the network are compromised by malicious attackers.

The rest of this paper is organized as follows. 
Section~II outlines preliminaries on graph notions and the problem settings. Section~III presents the novel RAC algorithm with an emphasis on the averaging part.
Sections~IV and V present the sharing detection algorithm and the fully distributed detection algorithm, respectively. Moreover, tight graph conditions for the proposed algorithms to achieve resilient average consensus are proved.
Section~VI provides numerical examples to demonstrate the efficacy of the proposed algorithms.
Finally, Section~VII concludes the paper.

\section{Preliminaries and Problem Setting}

In this section, we present preliminaries on graph theory, the average consensus algorithm, and the problem settings.

\subsection{Graph Notions}
Consider the directed graph $\mathcal{G} = (\mathcal{V},\mathcal{E})$ consisting of the node set $\mathcal{V}=\{1,...,n\}$ and the edge set $\mathcal{E}\subset \mathcal{V} \times \mathcal{V}$. Here, the edge $(j,i)\in \mathcal{E}$ indicates that node $i$ can receive information from node $j$. Node $j$ is said to be an in-neighbor of node $i$, and node $i$ is an out-neighbor of node $j$.
The sets of in-neighbors and out-neighbors of node $i$ are denoted by $\mathcal{N}_i^-=\{j\in \mathcal{V}:\, (j,i)\in \mathcal{E} \}$ 
and $\mathcal{N}_i^+=\{j\in \mathcal{V}:\, (i,j)\in \mathcal{E} \}$, respectively. 
The in-degree and out-degree of node $i$ are given by $d_i^-=\left| \mathcal{N}_i^-\right| $ and $d_i^+=\left| \mathcal{N}_i^+\right| $, respectively. Here, $\left| \mathcal{S}\right| $ is the cardinality of a finite set $\mathcal{S}$.
If the graph $\mathcal{G}$ is undirected, the edge $(j,i)\in \mathcal{E}$ indicates $(i,j)\in \mathcal{E}$.
A complete graph $\mathcal{K}_n= (\mathcal{V},\mathcal{E})$ is defined by $\mathcal{E} = \{(i, j)\in \mathcal{V} \times \mathcal{V} : i \neq j\}$.
A path from node $i_1$ to $i_m$ is a sequence of distinct nodes $(i_1, i_2, \dots, i_m)$, where $(i_j, i_{j+1})\in \mathcal{E} $ for $j=1, \dots, m-1$. It is also referred to as an $(m-1)$-hop path. We say that node $i_m$ is reachable from node $i_1$. Node $i_1$ is an $(m-1)$-hop in-neighbor of node $i_m$.
A directed graph $\mathcal{G}$ is said to be strongly connected\footnote[1]{An undirected graph is simply said to be connected if every node is reachable from every other node.} if every node is reachable from every other node.
An undirected graph $\mathcal{G}$ is said to be $k'$-connected if it contains at least $k'+1$ nodes and does not contain a set of $k'-1$ nodes whose removal disconnects the graph.

\subsection{Average Consensus and the Running-sum Algorithm}\label{avg_update}

The problem of multi-agent average consensus can be described as follows: Consider a system with $n$ agents interacting over the network modeled by the directed graph $\mathcal{G} = (\mathcal{V},\mathcal{E})$. Each agent $i \in \mathcal{V}$ has a scalar state $x_i[k] \in \mathbb{R}$ to be updated over time $k \in \mathbb{Z}_{\geq 0}$. The goal is to design distributed algorithms that allow agents to eventually converge to the average value of their initial states $\overline{X}= \frac{1}{n} \sum_{i=1}^{n} x_i[0]$, where each agent utilizes only the local information from their neighboring agents during the consensus forming. The push-sum ratio consensus algorithm \cite{kempe2003gossip} was proposed to achieve this goal through two iterative processes on each agent. Here, we describe this algorithm for the time-varying graph $\mathcal{G}[k] = (\mathcal{V},\mathcal{E}[k])$, where $\mathcal{E}[k]\subseteq \mathcal{E}$. Denote the set of out-neighbors of agent $i$ at time $k$ by $\mathcal{N}_i^+[k]$ and the out-degree by $d_i^+[k]= |\mathcal{N}_i^+[k]|$; we employ similar notations for the set of in-neighbors $\mathcal{N}_i^-[k]$ and the in-degree $d_i^-[k]$.

We first introduce the push-sum algorithm, which is the basis of the running-sum algorithm.
Each node $i$ has two state variables, $y_i [k]$ and $z_i [k]$, and updates them as 
\begin{equation}
	\begin{aligned}
		y_i[k+1] \medspace\medspace &= \sum_{j\in \mathcal{N}_i^-[k] \cup \{i\}}  \frac{y_j[k]}{1+d_j^+[k]},  \\
		z_i[k+1] \medspace\medspace &= \sum_{j\in \mathcal{N}_i^-[k] \cup \{i\}}  \frac{z_j[k]}{1+d_j^+[k]},  
	\end{aligned}
	\label{regular_update}
\end{equation}
where $y_i[0]=x_i[0]$ and $z_i[0]=1$ for $i\in \mathcal{V}$. The algorithm
requires each node $i$ to know its out-degree $d_i^+[k]$, and transmit to each out-neighbor the values
\begin{equation}
	\overline{y}_i[k] \medspace\medspace := \frac{y_i[k]}{1+d_i^+[k]}, \medspace\medspace
	\overline{z}_i[k] \medspace\medspace := \frac{z_i[k]}{1+d_i^+[k]} .  
	\label{y_bar}
\end{equation}
Then, by \eqref{regular_update}, these out-neighbors take the sum of received values as their new values.

At each time $k$, node $i$ calculates the ratio 
\[r_i [k] := \frac{y_i [k]}{z_i [k] }. \]
Under some joint connectivity assumptions on the union of the underlying graphs in a certain time window, it was reported in, e.g., \cite{hadjicostis2022trustworthy} that $r_i [k]$ asymptotically converges to the average of the initial values, i.e.,
\begin{equation}
	\lim_{k\to \infty} r_i [k]= \frac{\sum_j y_j[0]}{\sum_j z_j[0]} = \overline{X}, \medspace\medspace
	\forall i \in \mathcal{V}.
\end{equation}

Now, the running-sum ratio consensus algorithm is a variation of the push-sum algorithm used to overcome packet drops or unknown out degrees \cite{hadjicostis2015robust}. It can be summarized as follows. At each time $k$, node $i$ does not send $\overline{y}_i[k]$, $\overline{z}_i[k]$ in \eqref{y_bar} to its out-neighbors. Instead, it sends the so-called $y$ and $z$ running sums denoted by $\lambda$ and $\gamma$, respectively. The two values contain the information of $\overline{y}_i[k]$ and $\overline{z}_i[k]$, and are defined as 
\begin{equation}
	\lambda_i[k+1] \medspace\medspace := \sum\limits_{t=0}\limits^{k}{ \overline{y}_i[t]  },  \medspace\medspace
	\gamma_i[k+1] \medspace\medspace := \sum\limits_{t=0}\limits^{k}{  \overline{z}_i[t] }.  
	\label{lambda_gamma}
\end{equation}
Therefore, an out-neighbor obtains node $i$'s values $\overline{y}_i[k]$, $\overline{z}_i[k]$ by taking the difference of two consecutive $\lambda_i[k]$, $\gamma_i[k]$ as 
\begin{equation*}
	\begin{aligned}
		\overline{y}_i[k] \medspace\medspace &=  \lambda_i[k+1] -\lambda_i[k] ,  \\
		\overline{z}_i[k] \medspace\medspace &=  \gamma_i[k+1] - \gamma_i[k] .  
	\end{aligned}
\end{equation*}
Thus, the running-sum algorithm can achieve average consensus as the push-sum algorithm does, with additional bookkeeping procedures at each node.

Next, we formally outline the structure of the running-sum ratio consensus algorithm \cite{hadjicostis2015robust}. At each time $k$, node $i$ maintains two kinds of values: (i) the running-sum values $\lambda_i[k]$ and $\gamma_i[k]$ of its own; and (ii) the two incoming running-sums from each in-neighbor $j$. More specifically, node $i$ uses $\delta_{ij}[k]$ and $\omega_{ij}[k]$ to keep track of the $y$ and $z$ running sums from node $j$, respectively. They are given as
\begin{equation}
	\begin{aligned}
		\delta_{ij}[k]  &=\lambda_j[k], \medspace\medspace \delta_{ij}[0]=0 ,  \\
		\omega_{ij}[k]  &=\gamma_j[k] , \medspace\medspace \omega_{ij}[0]=0 .  
	\end{aligned}
	\label{delta_omega}
\end{equation}

\subsection{Update Rule and Threat Model}\label{problemsetting}

We now introduce the model of the adversaries and the general structure of the proposed resilient algorithm. First, the node set $\mathcal{V}$ is partitioned into the set of normal nodes $\mathcal{N}$ and the set of adversary nodes $\mathcal{A}$. The latter set is unknown to the normal nodes at time $k=0$. The adversary nodes in $\mathcal{A}$ try to prevent the normal nodes in $\mathcal{N}$ from reaching average consensus. All algorithms in this paper are synchronous.

In our problem setting, the adversary nodes can be quite powerful. We assume that they may behave arbitrarily, deviating from the protocols with which the normal nodes are equipped. Here, we define the threat model of this paper; see also \cite{leblanc2013resilient, dibaji2019resilient, yuan2021secure,hadjicostis2023identification}.

\begin{definition}
	\textit{($f$-total / $f$-local set)}
	The set of adversary nodes $\mathcal{A}$ is said to be $f$-total
	if it contains at most $f$ nodes, i.e., $\left| \mathcal{A}\right| \leq f$.
	Similarly, it is said to be $f$-local
	if for any normal node $i\in \mathcal{N}$, it has at most $f$ adversary in-neighbors, i.e., $\left|\mathcal{N}_i^- \cap \mathcal{A}\right| \leq f, \forall i \in \mathcal{N}$.
\end{definition}

\begin{definition}
	\textit{(Malicious nodes)}
	An adversary node $i\in \mathcal{A}$ is said to be malicious if it changes its own value arbitrarily and sends the same value\footnote[2]{It may also decide not to make a transmission at any time. This corresponds to the crash model \cite{Lynch}.} to its neighbors at each transmission. 
\end{definition}

In this paper, we focus on the malicious model. This model is reasonable in applications such as wireless sensor networks and robotic networks, where neighbors' information is obtained by broadcast communication or vision sensors \cite{goldsmith2005wireless}. This model is different from the Byzantine model, which is well-studied in computer science \cite{Lynch}. Specifically, a Byzantine node can send different values to its different neighbors.
Here, we define a connectivity notion for directed graphs. A directed graph $\mathcal{G}$ is said to be $k'$-strongly connected if after removing any set of nodes satisfying the ($k'-1$)-local model, the remaining digraph is strongly connected.

As mentioned in the Introduction, the proposed algorithm for resilient average consensus is based on detection of the malicious nodes in the network. To this end, each normal node $i$ is equipped with a detection algorithm to monitor the behaviors of its own neighbors. The output of such an algorithm will be the set of malicious nodes known or detected by node $i$ by time $k$ and is denoted by $\mathcal{A}_i[k]$. 

The overall structure of the proposed algorithm is as follows. At each time $k$, each normal node $i$ forms an information set denoted by $\Phi_i[k]$. This set will be shared with its out-neighbors, who will make use of it for their averaging and detection algorithms. The exact contents of the information sets  will be given in the next subsection. Specifically, node $i$ conducts the four steps given below at time $k+1$:

\textit{1.~Transmit} the information set $\Phi_i[k]$ (described in \eqref{phi} later) and the detection information $\mathcal{A}_i[k]$ to all its out-neighbors $j\in \mathcal{N}_i^+$.

\textit{2.~Receive} the information sets $\Phi_j[k]$ and the detection information $\mathcal{A}_j[k]$ from all in-neighbors $j\in \mathcal{N}_i^-$.

\textit{3.~Detect} neighbors according to the detection algorithm to obtain $\mathcal{A}_i[k+1]$.

\textit{4.~Update} $x_i[k+1]$ according to the resilient average consensus algorithm.

The RAC algorithm in Step 4 will be outlined in Section~\ref{sec_rac} whereas the detection algorithm in Step 3 will be given in Sections~\ref{Secfors1} and \ref{Secfors2}.

\subsection{Detection of Adversaries and Information Sets}\label{infoset}

We now describe the general approach for our detection algorithms, based on the ideas from \cite{yuan2021secure}. As mentioned above, each normal node monitors its neighboring nodes and checks if any inconsistencies can be found in their behaviors. In particular, our approach employs two-hop communication among the nodes. That is, each node sends the information received from its direct in-neighbors to its out-neighbors together with its own information. We assume that each node receives information from its two-hop in-neighbors via a sufficient number of different paths. Then, if any of its direct in-neigbors make changes in the information to be passed on, there will be inconsistencies in the data, which can lead to detections of misbehaviors.
To formalize this approach, in this subsection, we first introduce the key notion of information sets of the nodes and then provide assumptions regarding these sets for both normal and malicious nodes. 

Information sets define the data exchanged within the network for performing detection and averaging.
Node $i$'s information set $\Phi_i[k]$ to be broadcasted at time $k+1$ is 
\begin{equation}
	\begin{aligned}
		\Phi_i[k]= & \Big( \mathcal{A}_i[k], (i,\delta_{ii}[k+1|k], \omega_{ii}[k+1|k]),   \\ 
		&   \{(j,\delta_{ij}[k|k], \omega_{ij}[k|k])\}_{j\in \mathcal{N}_i^-\cup\{i\}} \Big).
	\end{aligned}
	\label{phi}
\end{equation}
It has three parts. The first is the set $\mathcal{A}_i[k]$ of adversaries detected by node $i$ by time $k$. The second and the third are node $i$'s own and its in-neighbors' information.
We use the notation $\delta_{ii}[k+1|k]$ to indicate that this value is in the set $\Phi_i[k]$ from time $k$. Note that $\Phi_i[k-1]$ and $\Phi_i[k]$ contain $\delta_{ii}[k|k-1]$ and $\delta_{ii}[k|k]$, respectively, and if node $i$ is malicious, these values may be different.

Next, we introduce assumptions on the nodes' knowledge and the attacks generated by the malicious nodes.

\begin{assumption}\label{neighborinfo}
	Each node $i\in \mathcal{N}$ has access to the information sets received from its in-neighbors. 
	It knows the indices and topology of its two-hop in-neighbors and those of its direct out-neighbors. 
\end{assumption}

\begin{assumption}\label{cannotadd}
	Each node $i\in \mathcal{A}$ may have all the information of the entire network including the topology and state values of all nodes and may cooperate with other malicious nodes even if no edges exist. It can manipulate its own information set in \eqref{phi} and broadcasts the same set to out-neighbors.
\end{assumption}

By Assumption~\ref{neighborinfo}, each normal node has only partial knowledge about the network. To perform detection based on two-hop communication, normal nodes are aware of the topology of two-hop in-neighbors. This setting may be justified in sensor networks when the nodes are geographically fixed and the network topology remain the same. Similar settings are studied in \cite{zheng2021accurate,hadjicostis2023identification,hadjicostis2023trustworthy}. We should highlight that this assumption can be met relatively easily and is of low cost. In MSR-based resilient consensus algorithms \cite{dibaji2018resilient,leblanc2013resilient}, it is sufficient that fault-free nodes have access to the information only from their one-hop neighbors. Clearly, this requirement is weaker than Assumption~\ref{neighborinfo}, but MSR-based algorithms are not capable to detect malicious agents (though they can avoid their influences). Also, in contrast, each fault-free node must know the topology of the entire network in related works based on observer-based detection \cite{pasqualetti2012consensus}, \cite{sundaram2011distributed}, multi-hop communication \cite{Lynch}, \cite{su2017reaching}, and Byzantine agreement \cite{tseng2015fault}.

On the other hand, in Assumption~\ref{cannotadd}, since two-hop communication is employed, a malicious node may modify not only its own states but also those received from its in-neighbors, which are part of its information set. This means that there are more options in terms of attacks compared to, e.g., the MSR-based algorithms. However, we emphasize that such attacks can be detected. For example, a malicious node may add or delete some pairs of agent IDs and values in its information set. It may also decide to remove information from some of its in-neighbors. Since the normal agents have the knowledge of up to their two-hop in-neighbors, attacks will be found by their direct out-neighbors. Moreover, in the case that a malicious node adopts an ID of another node, such attacks can be detected too \cite{yuan2021secure}.
Therefore, our approach does not assume that each node should identify the senders of incoming messages, which is imposed in \cite{hadjicostis2022trustworthy,hadjicostis2023trustworthy}.

\section{Resilient Average Consensus}\label{sec_rac}

In this section, we define the RAC problem and introduce our algorithm with an emphasis on the averaging part.

\subsection{Problem Statement}

Consider a time-invariant directed graph $\mathcal{G} = (\mathcal{V},\mathcal{E})$.
In our \textit{resilient average consensus (RAC)} problem, the goal is to design distributed algorithms that allow normal agents to eventually converge to the average value of their initial states, i.e.,
\begin{equation}\label{RAC}
	x_i[k] \to \overline{X}_{\mathcal{N}_0} := \frac{\sum_{i\in\mathcal{N}_0} x_i[0]} {|\mathcal{N}_0|} \medspace\medspace \textup{as} \medspace\medspace   k \to \infty, \medspace\medspace \forall i \in \mathcal{N} ,
\end{equation}
regardless of the adversarial actions taken by the nodes in $\mathcal{A}$.
In \eqref{RAC}, the resilient average consensus is not defined on the true set $\mathcal{N}$ of normal agents but on the set $\mathcal{N}_0  \supseteq \mathcal{N}$ of all nodes that behave properly over time. This is justified for the case where an adversary node acts as normal for all times. In this case, such an adversary agent's value is included in the average computing since there is no way to detect such adversary nodes. See Section~\ref{discuss22} for more discussions.

Similar problems have been studied in related works \cite{dibaji2019resilient,zheng2021accurate,hadjicostis2023trustworthy}. However, our approach has advantages over these works in different aspects as we discuss in due course.

\subsection{Overview of the RAC Algorithm}

To solve the RAC problem in a distributed iterative fashion, normal agents must know whether their neighbors are malicious or normal. Thereafter, they only interact with normal ones to obtain the desired average. Hence, our RAC algorithm contains two parts: (i) detection and (ii) averaging. The detection algorithm guarantees that each normal node can detect any malicious in-/out-neighbors. On the other hand, the averaging algorithm needs to ensure that each normal node can remove the erroneous effects received from malicious neighbors by the time those neighbors are detected as malicious.

In this section, we first present the averaging algorithm based on the RAC approach of \cite{hadjicostis2022trustworthy}, where each normal node is assumed to have access to the correct detection information of all normal nodes in the network. Note that the detection approach is not discussed in \cite{hadjicostis2022trustworthy}.
For ease of presentation, we assume that every normal node can obtain the correct detection of malicious neighbors by a certain time $k_c$.
Our detection algorithms presented later in Sections~\ref{Secfors1} and \ref{Secfors2} are tailored for working with this averaging algorithm and realize the important function of correct detection.

Recall that in the running-sum algorithm, each agent maintains two variables $\lambda$ and $\gamma$ to record the sum of its own $y$ and $z$ values from the initial time. This feature makes it a good basis for our RAC algorithm. Moreover, for the running-sum algorithm to achieve average consensus, the adjacency matrix needs to be column stochastic, which is easy to realize in directed networks. In contrast, the related resilient averaging works for undirected networks \cite{zheng2021accurate, hadjicostis2023trustworthy} are based on average consensus via linear iterations \cite{olfati2007consensus}, which require the adjacency matrix to be doubly stochastic. However, it may be difficult to design such an adjacency matrix for directed networks, and even tougher for time-varying networks.

Next, we introduce the major steps of our RAC algorithm.
At each time $k$, each normal node $i$ utilizes the detection algorithm to update its detection information regarding in-/out-neighbors in $\mathcal{A}_i[k]$. Then it updates the set of non-faulty in-neighbors as $\mathcal{M}_i^-[k]=\mathcal{N}_i^- \setminus \mathcal{A}_i[k] $ and updates the set of non-faulty out-neighbors as $\mathcal{M}_i^+[k]=\mathcal{N}_i^+ \setminus \mathcal{A}_i[k] $.
Simultaneously, the out-degree is updated by $d_i^+[k]= |\mathcal{M}_i^+[k]|= |\mathcal{N}_i^+ \setminus \mathcal{A}_i[k]|$. 
Given the new $\mathcal{A}_i[k] $, node $i$ updates its $y$ and $z$ using only the running sums from the in-neighbors in $\mathcal{M}_i^-[k]\cup \{i\}$:
\begin{equation}
	\begin{aligned}
		y_i[k] \medspace\medspace &= \sum\limits_{j\in \mathcal{M}_i^-[k] \cup \{i\}}{(\delta_{ij}[k]-\delta_{ij}[k-1])},  \\
		z_i[k] \medspace\medspace &= \sum\limits_{j\in \mathcal{M}_i^-[k] \cup \{i\}}{(\omega_{ij}[k]-\omega_{ij}[k-1])}.  
	\end{aligned}
	\label{yzkplus1}
\end{equation}

By the assumption that every normal node obtains the correct detection of malicious neighbors by time $k_c$, eq.~\eqref{yzkplus1} constrains the averaging within only normal nodes for time $k>k_c$. Therefore, the running-sum algorithm on normal nodes achieves ratio consensus of the values of normal agents at time $k_c$ if the subgraph of normal nodes (i.e., the \textit{normal network}) is strongly connected. However, due to possible erroneous effects from malicious neighbors, the sum of values of normal agents at time $k_c$ may not be the sum of initial values of normal agents. 
Thus, if the erroneous effects from malicious neighbors can be subtracted by normal agents and normal agents' values sent to malicious neighbors can be compensated precisely, then normal agents can recover the sum of their initial values and achieve resilient average consensus.

\subsection{Removing Malicious Effects Based on Detection}

In this part, we introduce how the normal agents conduct the subtraction of in-coming malicious values and compensation of out-going normal values, respectively.
The two actions are different for the cases where in-neighbors or out-neighbors are detected as malicious for the first time.
Note that the actions taken by each node $i$ are based on its own detection $\mathcal{A}_i[k]$.

Case 1: A malicious in-neighbor $j$ is detected for the first time at time $k$.
In this case, node $i$ not only ignores node $j$’s values for updating as in \eqref{yzkplus1} but also removes the effects received from node $j$ so far, i.e., $\delta_{ij}[k], \omega_{ij}[k]$ in \eqref{delta_omega}. This subtraction of in-coming malicious values has to be done for each in-neighbor $j$ in the set $\Delta \mathcal{M}_i^-[k]= \mathcal{M}_i^-[k-1] \setminus \mathcal{M}_i^-[k] $, 
which consists of node $i$'s in-neighbors that are detected as malicious at time $k$. Specifically, we replace $y_i[k]$ and $z_i[k]$, respectively, with
\begin{equation}
	\begin{aligned} 
		y_i[k] \medspace\medspace &= \medspace\medspace y_i[k] - \sum\limits_{j\in \Delta\mathcal{M}_i^-[k]}{\delta_{ij}[k-1]},\\
		z_i[k] \medspace\medspace &= \medspace\medspace z_i[k] - \sum\limits_{j\in \Delta\mathcal{M}_i^-[k]}{\omega_{ij}[k-1]}.
	\end{aligned}
	\label{y_change_in}
\end{equation}

Case 2: A malicious out-neighbor $q$ is detected for the first time at time $k$. In this case, node $i$ not only decreases its out-degree by one ($d_i^+[k]=d_i^+[k-1]-1$) but also compensates for all its own values sent to node $q$ while $q$ was considered normal. It does so by adding to its $y$ and $z$ values its own $y$ and $z$ running sums ($\lambda_i[k],\gamma_i[k]$), respectively.
Similar to Case 1, this adjustment has to be done for every out-neighbor that is detected as malicious at time $k$. Let
$\Delta \mathcal{M}_i^+[k]= \mathcal{M}_i^+[k-1] \setminus \mathcal{M}_i^+[k] $ 
be the set of node $i$'s out-neighbors that are detected as malicious at time $k$. Then $y_i[k]$ and $z_i[k]$ are updated as
\begin{equation}
	\begin{aligned} 
		y_i[k] \medspace\medspace &= \medspace\medspace y_i[k] + |\Delta \mathcal{M}_i^+[k]|\lambda_i[k],\\[1mm]
		z_i[k] \medspace\medspace &= \medspace\medspace z_i[k] + |\Delta \mathcal{M}_i^+[k]|\gamma_i[k].
	\end{aligned}
	\label{y_change_out}
\end{equation}
This is needed because, e.g., $\lambda_i[k]$ is the cumulative $y$ values that were sent to any malicious out-neighbor by time $k$.

\subsection{Convergence Analysis}

Finally, we are ready to present our RAC algorithm in Algorithm~1.
By Assumption~\ref{neighborinfo}, node $i$ knows the sets $\mathcal{N}_i^-$, $\mathcal{N}_i^+$. Moreover, through our detection algorithms, node $i$ keeps track of the set of misbehaving in-/out-neighbors $\mathcal{A}_i[k]$ that it detected by time $k$. Then, by following the above two processes for malicious in-/out-neighbors, resilient average consensus can be achieved with Algorithm~1. The following proposition is the main convergence result for this algorithm from \cite{hadjicostis2022trustworthy} with some enhanced applicability.

\begin{proposition}\label{thm_consensus}
	Consider the directed network $\mathcal{G} = (\mathcal{V},\mathcal{E})$, where each node $i\in \mathcal{V}$ has an initial value $x_i[0]$. Under Assumptions~\ref{neighborinfo} and \ref{cannotadd}, if each normal node can detect all the malicious in- and out-neighbors, and the normal network is strongly connected, then the normal nodes executing Algorithm~1 converge to the average of their initial values given by $\overline{X}_{\mathcal{N}_0}= \frac{\sum_{i\in\mathcal{N}_0} x_i[0]} {|\mathcal{N}_0|}$ in \eqref{RAC} as $k \to \infty$.
\end{proposition}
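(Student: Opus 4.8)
The plan is to reduce the claim to the known convergence of the push-sum / running-sum iteration by isolating two ingredients: a \emph{restriction} property and a \emph{conservation} property. By hypothesis every normal node eventually detects all of its malicious in- and out-neighbors, so there is a finite time $k_c$ after which each $\mathcal{A}_i[k]$ is correct and stationary for all $i\in\mathcal{N}$. For $k>k_c$ the update \eqref{yzkplus1} draws only on in-neighbors in $\mathcal{M}_i^-[k]\cup\{i\}$, and node $i$ divides its outgoing share by $1+d_i^+[k]$ with $d_i^+[k]=|\mathcal{M}_i^+[k]|$; since node $i$ keeps one such share and delivers one to each of its $d_i^+[k]$ remaining (normal) out-neighbors, the total redistributed equals $y_i[k]$. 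Hence the iteration restricted to the nodes of $\mathcal{N}_0$ is an ordinary running-sum iteration whose weight matrix is column stochastic on the normal network. As this network is strongly connected, the convergence result recalled in Section~\ref{avg_update} then gives $r_i[k]\to \big(\sum_{j\in\mathcal{N}_0} y_j[k_c]\big)/\big(\sum_{j\in\mathcal{N}_0} z_j[k_c]\big)$ for every $i\in\mathcal{N}_0$. It therefore remains to show that the corrected masses satisfy $\sum_{j\in\mathcal{N}_0} y_j[k_c]=\sum_{j\in\mathcal{N}_0} x_j[0]$ and $\sum_{j\in\mathcal{N}_0} z_j[k_c]=|\mathcal{N}_0|$.

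The core of the argument is this conservation bookkeeping, which I would carry out for the $y$-mass (the $z$-mass is identical, using $z_j[0]=1$). I would track the total normal mass $S_y[k]=\sum_{j\in\mathcal{N}_0} y_j[k]$ and identify the only two sources of discrepancy from the ideal value $\sum_{j\in\mathcal{N}_0} x_j[0]$: mass that leaked from normal nodes to as-yet-undetected malicious out-neighbors, and spurious mass injected into normal nodes by as-yet-undetected malicious in-neighbors. The key observation is that $\lambda_i[k]=\sum_{t=0}^{k-1}\overline{y}_i[t]$ equals exactly the cumulative share that node $i$ delivered to \emph{each} of its out-neighbors up to time $k$, so the Case~2 addition $|\Delta\mathcal{M}_i^+[k]|\lambda_i[k]$ in \eqref{y_change_out} restores precisely the mass siphoned off to each newly detected malicious out-neighbor. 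Symmetrically, $\delta_{ij}[k-1]=\lambda_j[k-1]$ is exactly the cumulative spurious mass node $i$ accepted from a malicious in-neighbor $j$, so the Case~1 subtraction in \eqref{y_change_in} cancels it. These two corrections are independent: the mass $i$ sent to a malicious $q$ is recovered by $i$, while any fabricated value $q$ forwarded to some normal $i'$ is separately removed by $i'$, so no quantity is handled twice. I would then conclude that once all corrections have fired by $k_c$, $S_y[k]$ has been restored to $\sum_{j\in\mathcal{N}_0} x_j[0]$ and stays conserved thereafter by column stochasticity of the restricted iteration.

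The hard part will be making the timing of the corrections rigorous, since detections occur at possibly different times across nodes and out-degrees change along the way. I would verify that the running sum $\lambda_i[k]$ correctly accounts for what a malicious out-neighbor $q$ received over the \emph{entire} interval during which it was treated as normal, \emph{regardless} of whether node $i$ had already shed other out-neighbors earlier: each summand $\overline{y}_i[t]$ automatically uses the out-degree in force at time $t$, and $q$ received exactly that share at each such $t$. This shows the order of detections is immaterial and that no leaked or injected mass is double counted or missed. A secondary point to settle is that adversaries in $\mathcal{N}_0\setminus\mathcal{N}$, which act normally forever and hence cannot be detected, contribute indistinguishably from genuine normal nodes; they are therefore correctly included in both sums, matching the definition of $\overline{X}_{\mathcal{N}_0}$ in \eqref{RAC}.

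Finally, I would close by noting that for $k>k_c$ the pair $(y_j[k],z_j[k])$ evolves on the strongly connected normal network as a column-stochastic push-sum whose asymptotic ratio depends only on the conserved totals $S_y[k_c]$ and $S_z[k_c]$ established above. Applying the cited convergence result then yields $r_i[k]\to \big(\sum_{j\in\mathcal{N}_0} x_j[0]\big)/|\mathcal{N}_0|=\overline{X}_{\mathcal{N}_0}$ for all $i\in\mathcal{N}_0$, and since the reported estimate $x_i[k]$ is given by this ratio, this is exactly the desired conclusion \eqref{RAC}.
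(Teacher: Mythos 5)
Your proposal is correct and follows essentially the route the paper itself indicates: the paper omits a detailed proof, deferring to the cited work \cite{hadjicostis2022trustworthy}, but states that the key idea is to show that the sums of the $y$ and $z$ values of the normal agents remain invariant at all times, which is exactly your conservation argument (Case~1 subtraction cancelling injected mass via $\delta_{ij}[k-1]=\lambda_j[k-1]$, Case~2 addition restoring leaked mass via $\lambda_i[k]$), combined with standard push-sum ratio convergence on the strongly connected normal network. Your additional care about detection timing, changing out-degrees, and the $\mathcal{N}_0$ versus $\mathcal{N}$ distinction matches the paper's treatment and introduces no discrepancy.
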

\vspace*{2.0mm}

\begin{remark}
	The convergence results of this proposition have appeared in \cite{hadjicostis2022trustworthy}; hence, we omit the proof for brevity. The key idea is to show that the sums of the $y$ and $z$ values of normal agents remain invariant at all times. Moreover, it is emphasized that we have improved the results for the averaging algorithm by relaxing the required assumptions as well as justifying the case where adversary agents act normally at all times. Specifically, the results in \cite{hadjicostis2022trustworthy} need the assumption that each transmission of any node $i\in\mathcal{V}$ is associated with a unique node ID that allows the receiver to identify the sender. In contrast, we have discussed in Section~\ref{infoset} that we do not assume that each malicious node must send its real ID as the misbehavior of changing ID can be detected by our algorithms. Besides, the work \cite{hadjicostis2022trustworthy} requires the assumption that each normal node knows the correct detection information of all normal nodes in the network after time $k_c$. However, each normal node is supposed to detect only the malicious in- and out-neighbors in Proposition~\ref{thm_consensus}, which can save storage resources. We will show later that such detection can be realized through our fully distributed detection algorithm.
\end{remark}

In the following sections, we present two detection algorithms for our RAC algorithm, which are redesigned based on the two-hop detection approach in our previous work \cite{yuan2021secure}. In Section~\ref{Secfors1}, we propose the sharing detection algorithm for undirected networks. In Section~\ref{Secfors2}, we present the fully distributed detection algorithm for general directed networks.
The latter algorithm is fully distributed and is efficient in storage usage compared to the related works \cite{dibaji2019resilient,hadjicostis2022trustworthy, hadjicostis2023trustworthy}, where each normal agent must obtain the correct detection of all malicious or normal agents in the network.

\begin{algorithm}[]
	\caption{Resilient Average Consensus Algorithm}
	\LinesNumbered 
	\KwIn{Node $i$ knows $x_i[0]$, $\mathcal{N}_i^-$, $\mathcal{N}_i^+$ by Assumption~\ref{neighborinfo}.}
	
	\SetKwBlock{newbox}{Initialization:}{}
	\newbox{
		\SetAlgoVlined
		Node $i$ initializes $\mathcal{A}_i[0]=\mathcal{A}_i[1]= \emptyset$, and $y_i[0]= x_i[0]$, $\lambda_{i}[0]= 0$, $\delta_{ij}[0]= 0$, $\forall j\in \mathcal{N}_i^-$, \
		$z_i[0]= 1$, $\gamma_i[0]= 0$, $\omega_{ij}[0]= 0$, $\forall j\in \mathcal{N}_i^-$.
		
		At $k=1$, send $\lambda_i[1]$, $\gamma_i[1]$ using eq. \eqref{lambda_gamma} to $\forall q\in \mathcal{N}_i^+$ and receive $\delta_{ij}[1]$, $\omega_{ij}[1]$ from $\forall j\in \mathcal{N}_i^-$.
		
		Obtain $y_i[1]$, $z_i[1]$ using eq. \eqref{regular_update}.
		
		Obtain $\lambda_i[2]$, $\gamma_i[2]$ using eq. \eqref{lambda_gamma}.
		
	}
	
	\For{ $k\geq2$ }{
		
		\textbf{Transmit:} $\Phi_i[k-1]$ to $\forall q\in \mathcal{N}_i^+$.
		
		\textbf{Receive:} $\Phi_j[k-1]$ from $\forall j\in \mathcal{N}_i^-$.
		
		\textbf{Detect:} in-/out-neighbors according to the Detection Algorithm to obtain $\mathcal{A}_i[k]$.
		
		\SetKwBlock{newbox}{Update using detection of in-neighbors:}{}
		\newbox{
			\SetAlgoVlined
			Set $\mathcal{M}_i^-[k]=\mathcal{N}_i^- \setminus \mathcal{A}_i[k]$.
			
			In Case 1: 
			
			For each $j\in \mathcal{N}_i^- \cup \{i\}$, set 
			
			$\delta_{ij}[k] = \left\{
			\begin{aligned} 
				&\lambda_j[k], & & \forall j \in \mathcal{M}_i^-[k] \cup \{i\},   \\
				&0,  & &\textup{otherwise}.   
			\end{aligned}
			\right.
			$  
			
			$\omega_{ij}[k] = \left\{
			\begin{aligned} 
				&\gamma_j[k], & & \forall j \in \mathcal{M}_i^-[k] \cup \{i\},   \\
				&0,  & &\textup{otherwise}.   
			\end{aligned}
			\right.
			$  
		}
		
		\SetKwBlock{newbox}{Compute:}{}
		\newbox{ 
			
			$y_i[k] \medspace\medspace = \sum\limits_{j\in \mathcal{N}_i^- \cup \{i\}}{(\delta_{ij}[k]-\delta_{ij}[k-1])}$,  
			
			$z_i[k] \medspace\medspace = \sum\limits_{j\in \mathcal{N}_i^- \cup \{i\}}{(\omega_{ij}[k]-\omega_{ij}[k-1])}$.
		}
		
		\SetKwBlock{newbox}{Update using detection of out-neighbors:}{}
		\newbox{
			\SetAlgoVlined
			Set $\mathcal{M}_i^+[k]=\mathcal{N}_i^+ \setminus \mathcal{A}_i[k]$.
			
			Set $d_i^+[k] = |\mathcal{M}_i^+[k] | $.
			
			Set $\Delta \mathcal{M}_i^+[k]= \mathcal{M}_i^+[k-1] \setminus \mathcal{M}_i^+[k] $.
			
			In Case 2: 
			
			$y_i[k] \medspace\medspace = \medspace\medspace y_i[k] + |\Delta \mathcal{M}_i^+[k]|\lambda_i[k]$,  
			
			$z_i[k] \medspace\medspace = \medspace\medspace z_i[k] + |\Delta \mathcal{M}_i^+[k]|\gamma_i[k]$.
		}
		
		\SetKwBlock{newbox}{Compute:}{}
		\newbox{ 
			
			$\lambda_i[k+1] \medspace\medspace = \medspace\medspace \lambda_i[k] + y_i[k]/(1+d_i^+[k])$,  
			
			$\gamma_i[k+1] \medspace\medspace = \medspace\medspace \gamma_i[k] + z_i[k]/(1+d_i^+[k])$.
		}
		
		\KwOut{$ r_i[k] = y_i[k] / z_i[k] $}
		
	}
	
\end{algorithm}

\section{Sharing Detection in Undirected Networks}\label{Secfors1}

We introduce our first distributed detection algorithm to be presented as Algorithm~2, where the normal nodes are capable to detect malicious neighbors by using the two-hop information in undirected networks. It provides the basis for the two-hop detection in an adversarial environment, motivated by the works \cite{zhao2018resilient, yuan2021secure}.

\subsection{Detection Algorithm Design}

\begin{algorithm}[t]
	\caption{Sharing Detection Algorithm} 
	\LinesNumbered 
	\KwIn{$\Phi_j[k-1], \forall j\in \mathcal{N}_i^- \cup \{i\} $}

	\SetKwBlock{newbox}{Initialization:}{}
	\newbox{
		\SetAlgoVlined
		
		Node $i$ follows the initialization in Algorithm~1.
		
		At $k=1$, let $\mathcal{A}_i[1]$ include the IDs of in-neighbors not sending initial values to $i$.
		
		Let $\mathcal{C}_i[1]=\{\delta_{ij}[1], \omega_{ij}[1], \forall j \in \mathcal{N}_i^- \cup \{i\} \}$ be the initial check set.
	}
	
	\For{ $k\geq 2$ }{
		
		Let $\mathcal{A}_i[k]=\bigcup_{v\in \mathcal{N}} \mathcal{A}_v[k-1]$ by Assumption \ref{broadcast}. \
		
		\For{ $j\in \mathcal{M}_i^-[k]$ }{
			
			(Step 1) \If{$\mathcal{A}_j[k-1]\neq\mathcal{A}_i[k]$   }{
				let $j \in \mathcal{A}_i[k]$.
			}

			(Step 2) \If{any ID in $\Phi_j[k-1]$ $ \notin \mathcal{N}_j^- \cup \{j\} $ }{
				let $j \in \mathcal{A}_i[k]$.
			}
			
			(Step 3) \If{any of $\delta_{jh}[k-1|k-1]$ or $\omega_{jh}[k-1|k-1]$ 
				in $\Phi_j[k-1]$ is not equal to the corresponding value in $\mathcal{C}_i[k-1]$  }{
				let $j \in \mathcal{A}_i[k]$.
			}
			
			(Step 4) \If{any of $\delta_{jj}[k|k-1]$ or $\omega_{jj}[k|k-1]$ in $\Phi_j[k-1]$ is not equal to the reconstructed $\lambda_{j}'[k]$ or $\gamma_{j}'[k]$ by node $i$ }{
				let $j \in \mathcal{A}_i[k]$.
			}
			
		}
		
		\KwOut{$\mathcal{A}_i[k]$} 
		
		\textbf{Store:} $\delta_{jj}[k|k-1] $ and $\omega_{jj}[k|k-1] $ from $ \Phi_j[k-1]$, $ j\in \mathcal{N}_i^- \cup \{i\}$, into $\mathcal{C}_i[k]$.
		
	}
	
\end{algorithm}

For this algorithm, the sharing detection function below is needed for the communication among the nodes when events of detecting adversaries occur.

\begin{assumption}\label{broadcast}
	Once a malicious node is detected by any normal node at any time step, its ID will be securely notified to all nodes within the same time step.
\end{assumption}

This assumption also appeared in \cite{zhao2018resilient,yuan2021secure} for resilient consensus. As we reported in \cite{yuan2021secure}, the sharing detection function can be realized by introducing fault-free mobile nodes which are appropriately distributed throughout the network and are capable to immediately verify if the detection reports from a node is true or false. 
Note that the deployment of such mobile agents is only for verification of detection reports instead of detecting malicious agents by themselves.

The sharing function is crucial for Algorithm~2 since it is necessary for detecting malicious nodes that are neighboring and cooperating with each other. We must emphasize that if we do not have this function for Algorithm~2, it can only handle the case where no neighboring malicious nodes exist (i.e., any malicious node is surrounded by normal nodes only). This is exactly the case studied in related works \cite{zheng2021accurate, hadjicostis2023identification,hadjicostis2023trustworthy}.

We now present Algorithm~2. To ensure that all nodes follow the correct averaging in Algorithm~1, the normal nodes check consistency among the neighbors' information sets. In Algorithm~2, step 1 is to guarantee that each normal node should not use the values from the nodes detected as malicious already. Moreover, it ensures that a node does not falsely claim another node being malicious. Step~2 is to prevent the malicious nodes from faking any neighbors. Step~3 is to enforce the normal nodes not to modify the values received from their neighbors. Finally, step~4 is to guarantee that the neighbors follow the averaging in Algorithm~1.

\subsection{Necessary Graph Structure for Algorithm 2}

\begin{figure}[]
	\centering
	
	\hspace{9pt}
	\subfigure[]{
		\includegraphics[width=1.5cm]{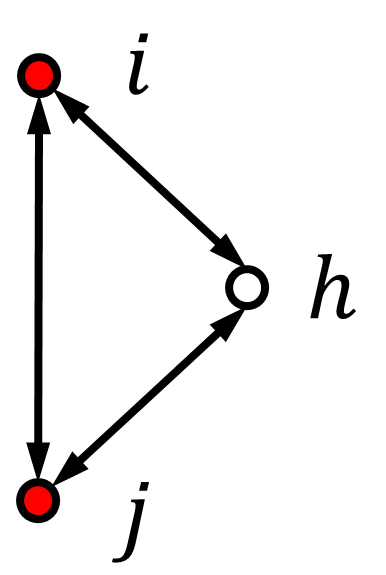}
	}
	\quad
	\hspace{5pt}
	\subfigure[]{
		\includegraphics[width=2.3cm]{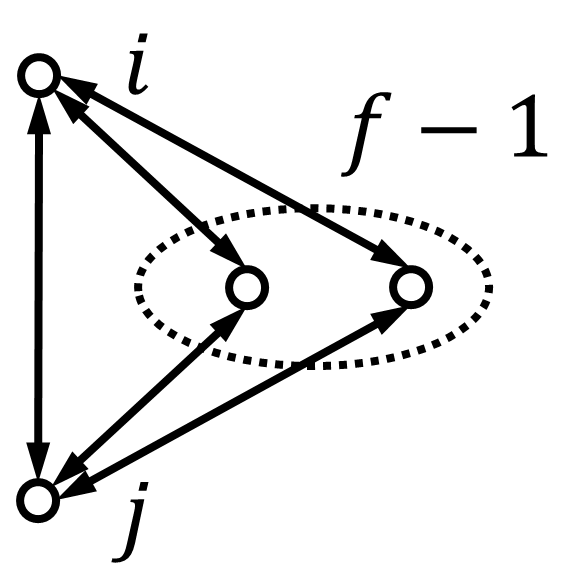}
	}
	
	\vspace{-3pt}
	\caption{Illustration of the graph condition for Algorithm~2: (a) There is at least one common normal neighbor between any pair of neighboring malicious nodes. (b) There are at least $f-1$ common neighbors between any pair of neighbors under the $f$-total model.}
	\label{share_condition}
\end{figure}

In this part, we provide the necessary graph condition for Algorithm~2. We can observe that a malicious node can be detected if there is at least one normal node among its neighbors that monitors its behavior. However, such detection may fail if neighboring malicious nodes cooperate with each other. Hence, it is critical that one or more normal nodes are present as their common neighbors. We illustrate this graph structure in Fig.~\ref{share_condition}(a). Here, nodes~$i$ and~$j$ are malicious. They can cooperate as follows: Node~$i$ manipulates $\delta_{ij}[k|k]$ in its information set, and node~$j$ manipulates $\delta_{ji}[k|k]$ in its information set. If there is no normal node having access to the information sets of both nodes~$i$ and~$j$, such an attack will not be detected. In contrast, the detection works if there is a common neighbor $h$ of nodes~$i$ and~$j$.

Next, we state the necessary and sufficient graph condition for Algorithm~2 to detect all the misbehaving agents.

\begin{lemma}\label{lemma_share} 
	Consider the undirected graph $\mathcal{G}=(\mathcal{V},\mathcal{E})$. 
	Algorithm~2 detects every pair of neighboring misbehaving nodes if and only if they have at least one normal common neighbor.
\end{lemma}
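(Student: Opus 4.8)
The plan is to prove the two directions of the biconditional separately, establishing that a normal common neighbor is both necessary (the detection fails without it) and sufficient (Steps~1--4 of Algorithm~2 succeed when it is present).

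For the \emph{necessity} direction, I would argue by contraposition, essentially formalizing the attack sketched in the discussion of Fig.~\ref{share_condition}(a). Suppose nodes~$i$ and~$j$ are neighboring malicious nodes with no common normal neighbor. The key observation is that each normal node detects a neighbor only by checking \emph{consistency} of the information sets it receives against what it reconstructs (Steps~3 and~4) or against the declared topology (Step~2). I would exhibit a concrete coordinated deviation: node~$i$ manipulates $\delta_{ij}[k|k]$ in $\Phi_i[k]$ and node~$j$ correspondingly manipulates $\delta_{ji}[k|k]$ in $\Phi_j[k]$, each reporting a falsified value for the edge between them. The crucial point is that any normal node performing Step~3 on node~$i$ must possess the ``true'' value $\delta_{ij}$ in its own check set $\mathcal{C}_i[k-1]$ to spot the discrepancy; acquiring that true value requires being a neighbor of node~$j$ as well. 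Since by assumption no normal node neighbors both~$i$ and~$j$, no normal node can cross-check the manipulated edge value, and the falsification goes undetected while the two malicious nodes keep all their \emph{other} broadcasts mutually consistent. This shows detection is impossible without a common normal neighbor.

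For the \emph{sufficiency} direction, I would assume a normal common neighbor $h$ exists and verify that $h$ (or another appropriately placed normal node) catches any misbehavior. The argument proceeds by exhausting the forms of misbehavior that Steps~1--4 are designed to catch, as outlined in the paragraph following Algorithm~2. If a malicious node fakes a neighbor ID, Step~2 flags it using the two-hop topology knowledge from Assumption~\ref{neighborinfo}. If it modifies a value received from an in-neighbor, Step~3 detects the mismatch against $\mathcal{C}_i[k-1]$, \emph{provided} the detector also hears from that in-neighbor---which is exactly what the common-neighbor $h$ guarantees for the edge shared by~$i$ and~$j$. If a node's own reported running sum $\delta_{jj}[k|k-1]$ is inconsistent with the value reconstructed from its inputs, Step~4 catches it. The role of Assumption~\ref{broadcast} (the sharing function) must be invoked here: once $h$ detects either~$i$ or~$j$, the detection is broadcast network-wide, so the coordinated cover-up between~$i$ and~$j$ collapses, and Step~1 propagates the identification consistently to all normal nodes.

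I expect the main obstacle to be the sufficiency direction, specifically making rigorous the claim that a single common normal neighbor suffices to detect \emph{any} coordinated manipulation on the shared edge rather than just the one example in Fig.~\ref{share_condition}(a). The subtle case is when the two malicious nodes try to be mutually consistent with \emph{each other} so as to evade Step~3 at their own interface; I would need to show that consistency between the two malicious broadcasts is incompatible with both of them correctly following the reconstruction in Step~4, so that at least one of them must produce a value that node~$h$ can falsify via its reconstructed $\lambda_h'[k]$ or $\gamma_h'[k]$. The careful bookkeeping linking the check set $\mathcal{C}_i$, the reconstructed running sums, and the two-hop information sets $\Phi_j[k-1]$ is where the argument requires the most attention; the necessity direction, being a single explicit construction, should be comparatively routine.
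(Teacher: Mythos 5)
Your proposal is correct and follows essentially the same route as the paper's own (sketched) proof: necessity via the coordinated manipulation of $\delta_{ij}[k|k]$ and $\delta_{ji}[k|k]$ on the shared edge, which no normal node can cross-check absent a common neighbor, and sufficiency via a case analysis of Steps~1--4 in which each normal neighbor verifies its own echoed values and reconstructs the running sums, with the common normal neighbor $h$ catching the shared-edge manipulation. The ``subtle case'' you flag at the end resolves immediately---if the two malicious nodes are mutually consistent \emph{and} both follow the averaging update, they are not misbehaving at all---which is implicitly how the paper (deferring details to Lemma~8 of \cite{yuan2021secure}) treats it as well.
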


\begin{proof}
	We can show similarly to Lemma~8 in \cite{yuan2021secure} except that the update check in step 4 in Algorithm~2 is more complex than the general consensus protocol used there. Here, we provide a sketch of the proof. In the undirected network using Algorithm~2, each normal node $i$ can at least verify its own values $\delta_{ji}[k|k]$ and $\omega_{ji}[k|k]$ in $\Phi_j[k]$, $j\in \mathcal{N}_i^-$. If a malicious node is only surrounded by normal agents, then it cannot change any $\delta_{ji}[k|k]$ and $\omega_{ji}[k|k]$ values from neighbors. Moreover, normal neighbor $i$ can reconstruct $\lambda_{j}[k+1]$ or $\gamma_{j}[k+1]$ through the averaging part in Algorithm~1 to check if node $j$ is following the averaging or not. Thus, misbehaving node $j$ will be detected by all normal neighbors. In the case of neighboring malicious nodes, they can modify the values from each other, but this is also detected by the normal common neighbor of them as discussed before. 
\end{proof}

Given that the malicious nodes are unknown and possibly cooperate with each other to launch attacks, we should impose a connectivity requirement so that the condition 
in Lemma~\ref{lemma_share} holds for any possible combination of pairs of neighboring malicious nodes in the network. 
The following proposition is the main result of this section. 

\begin{proposition}\label{theorem_share} 
	Consider the undirected network $\mathcal{G}=(\mathcal{V},\mathcal{E})$ under the $f$-total malicious model. Suppose that Assumptions \ref{neighborinfo}, \ref{cannotadd}, and \ref{broadcast} hold. Then, for Algorithm~1 with Detection Algorithm~2, the following hold:

	(a) All malicious nodes that behave against the averaging in Algorithm~1 are detected if and only if 
	for every pair of neighboring nodes, they have at least $f-1$ common neighbors.
	
	(b) Under the condition of (a), normal nodes achieve resilient average consensus if $\mathcal{G}$ is ($f+1$)-connected.
\end{proposition}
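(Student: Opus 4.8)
The plan is to handle the two parts separately: I would build part~(a) on Lemma~\ref{lemma_share} through a counting argument over the $f$-total adversary budget, and then obtain part~(b) from the averaging guarantee already established in Proposition~\ref{thm_consensus}.

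For part~(a), I would first extract a minimum-degree consequence of the graph condition: if a node $j$ has any neighbor $v$, then requiring $f-1$ common neighbors for the pair $(j,v)$ forces $j$ to have at least $f$ neighbors (the $f-1$ common ones together with $v$). Hence every non-isolated node, in particular every non-isolated malicious node, has degree at least $f$; since at most $f$ nodes are adversarial, each such malicious node retains at least one normal neighbor. This settles the misbehaviors caught in Steps~1, 2, and 4 of Algorithm~2 (inconsistent detection set, fabricated IDs, violation of the averaging rule), each of which concerns the node's own reported data and is therefore detectable by any single normal neighbor. The critical case is the Step~3 cooperative attack, in which a malicious node $j$ misreports a value $\delta_{jh}$ received from a neighbor $h$. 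If $h$ is normal, then $h$ itself verifies its own transmitted value and detects $j$. If $h$ is also malicious, then $(j,h)$ is a neighboring malicious pair whose $\ge f-1$ common neighbors contain at most $f-2$ adversaries (since $j$ and $h$ already consume two units of the budget), so at least one normal common neighbor survives and detects the discrepancy by Lemma~\ref{lemma_share}. For necessity I would exhibit the matching counterexample: if some pair has only $\le f-2$ common neighbors, the adversary corrupts both endpoints and all their common neighbors, at most $f$ nodes in total, leaving no normal common neighbor; then the mutual manipulation of $\delta_{ij}$ and $\delta_{ji}$ escapes detection by the ``only if'' direction of Lemma~\ref{lemma_share}.

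For part~(b), the aim is to invoke Proposition~\ref{thm_consensus}, whose hypotheses are (i) correct detection of all misbehaving in-/out-neighbors and (ii) connectivity of the network on the properly behaving nodes $\mathcal{N}_0$. Hypothesis~(i) is exactly the conclusion of part~(a). For hypothesis~(ii), I would write $\mathcal{N}_0=\mathcal{V}\setminus(\mathcal{A}\setminus\mathcal{N}_0)$ and note that the removed set $\mathcal{A}\setminus\mathcal{N}_0$ of detectable adversaries has size at most $|\mathcal{A}|\le f$. By the definition of $(f+1)$-connectivity, no set of $f$ nodes disconnects $\mathcal{G}$, so the subgraph induced on $\mathcal{N}_0$ stays connected, which is the undirected analogue of strongly connected. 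Applying Proposition~\ref{thm_consensus} then yields convergence of every normal node to $\overline{X}_{\mathcal{N}_0}$.

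I expect the main obstacle to be the sufficiency direction of part~(a): one must confirm that \emph{every} detectable deviation permitted by Assumption~\ref{cannotadd} reduces to a discrepancy observable by some normal node, and in the cooperative case specifically to a normal common neighbor of the offending pair. The counting that then guarantees such a common neighbor is elementary, and the remainder, including the connectivity step of part~(b), is bookkeeping over the adversary budget and a direct appeal to the connectivity definition.
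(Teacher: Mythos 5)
Your proposal is correct and follows essentially the same route as the paper's proof: part~(a) is reduced to Lemma~\ref{lemma_share} by showing that, under the $f$-total budget, the $f-1$ common-neighbor condition is equivalent to every neighboring malicious pair having a normal common neighbor, and part~(b) follows by noting that $(f+1)$-connectivity keeps the network on the properly behaving nodes connected so that Proposition~\ref{thm_consensus} applies. The only difference is one of self-containment: the paper outsources the equivalence of the two graph conditions to \cite{yuan2021secure}, whereas you prove both directions inline (the degree/budget counting for sufficiency, and the corrupt-both-endpoints-and-all-common-neighbors construction for necessity), which matches the argument that the reference is invoked for.
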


We proved in \cite{yuan2021secure} that under the $f$-total model, the graph condition in Lemma~\ref{lemma_share} is equivalent to condition (a) in Proposition~\ref{theorem_share}. Moreover, condition (b) in Proposition~\ref{theorem_share} guarantees that the normal network is connected. Thus, normal nodes using Algorithm~1 with Detection Algorithm~2 can achieve resilient average consensus as we proved in Proposition~\ref{thm_consensus}.

As we will further explain in Section~\ref{sec_sim_undirected}, the graph condition for Algorithm~2 does not require dense graph structures.
However, this feature is achieved at the cost of additional authentication from the secure mobile agents.

\section{Fully Distributed Detection in Directed Networks}\label{Secfors2}

In this section, we present our second distributed detection algorithm as Algorithm~3. It is fully distributed and operates without outside authentication for resilient average consensus in general directed networks. 
This important feature is realized through majority voting \cite{blahut1983theory, parhami1994voting} and requiring a denser graph structure. Moreover, we prove a necessary and sufficient graph condition for Algorithm~3 to properly function.

\begin{figure}[]
	\centering
	
	\hspace{9pt}
	\subfigure[]{
		\includegraphics[width=1.15cm]{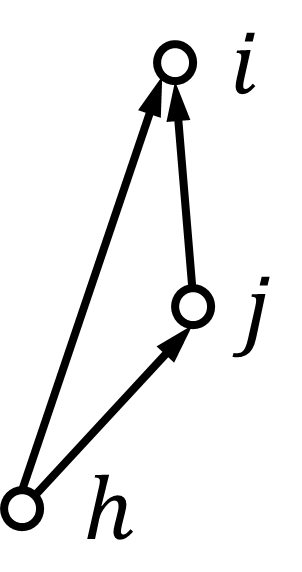}
	}
	\quad
	\hspace{5pt}
	\subfigure[]{
		\includegraphics[width=2.0cm]{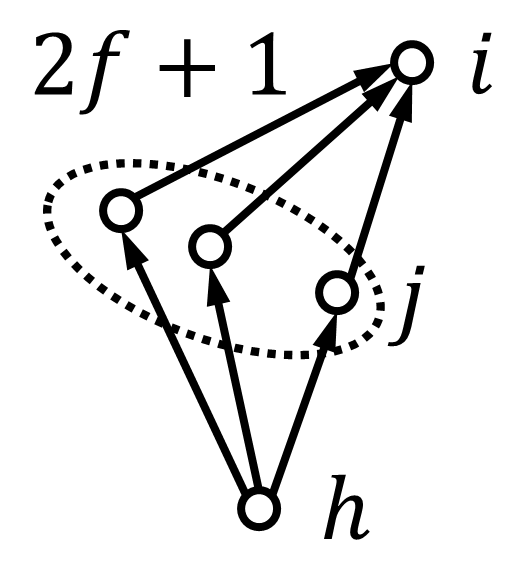}
	}
	
	\vspace{-3pt}
	\caption{Illustration of the graph condition for node $h$ being detectable by node $i$ in Definition~\ref{detectable}.}
	\label{fig_detectable}
\end{figure}

\subsection{Detection Algorithm Design}

In the last section, we have seen that node $i$'s information set consists of two parts that need to be investigated by its out-neighbors: (i) the current value $\overline{y}_i[k]$ ($\overline{z}_i[k]$) to check if it is updated according to Algorithm~1; (ii) the past values $\delta_{ij}[k]$ ($\omega_{ij}[k]$) used as inputs for updates to check if they are equal to the true values of the corresponding nodes.
In Algorithm~2, normal node $i$ can check whether part of the past values are manipulated in the information sets of its neighbors. More specifically, node $i$ knows the true past values of its direct neighbors. Then, using the sharing detection function, node $i$ can report a malicious node (or a pair of neighboring malicious nodes) if any values known by itself are manipulated.

However, to achieve fully distributed detection without any outside authentication, node $i$ should be able to independently verify whether any entries of the past values are manipulated in the information sets of its in-neighbors. 
We have seen that node $i$ can directly obtain the original value and the detection information of an in-neighbor $j$. For other values that node $i$ cannot directly obtain, we impose a certain graph structure so that it can access the original value and the detection information of a two-hop in-neighbor $h$ through majority voting. 
Specifically, if node $i$ receives $m$ values of node $h$, among the $m$ values, if more than $m/2$ values are the same, then node $i$ will take it as the true value of node $h$. 
In computer science, such redundancy schemes are common strategies to enhance the security and reliability of systems \cite{blahut1983theory,parhami1994voting}.

Next, we formally introduce the notion of \textit{detectable} nodes to indicate the kind of nodes that can be detected by node $i$ using Algorithm~3.

\begin{definition} \label{detectable}
	In the directed graph $\mathcal{G} = (\mathcal{V},\mathcal{E})$ under the $f$-local malicious model, node $h$ is said to be \textit{detectable} by node $i$ if one of the following conditions holds:
	\begin{itemize}
		\item $h\in \mathcal{N}_i^-$;
		\item there are at least $2f+1$ two-hop paths from $h$ to $i$.
	\end{itemize}
\end{definition}

We illustrate the above graph condition in Fig.~\ref{fig_detectable}.
Here, we also say that there is a detectable path from node $h$ to node $i$ if node $h$ is detectable by node $i$.
To achieve fully distributed detection, we need to impose a certain graph structure so that each node can have access to the necessary information used in its neighbors' updates (see Fig.~\ref{condition_s2}). We introduce the graph condition for Algorithm~3 as follows.

\begin{assumption} \label{condition2}
	A directed graph $\mathcal{G} = (\mathcal{V},\mathcal{E})$ under the $f$-local malicious model satisfies all the following conditions for $\forall i \in \mathcal{V}$:
	\begin{enumerate}
		\item any two-hop in-neighbor $h$ is detectable by $i$;
		\item any out-neighbor $q$ is detectable by $i$;
		\item any out-neighbor $l$ of in-neighbor $j$ is detectable by $i$.
	\end{enumerate}
	We will refer to conditions 1)--3) together as the graph condition for Algorithm 3.
\end{assumption}

\begin{figure}[t]
	\centering

	\hspace{45pt}
	\includegraphics[width = 6cm ]{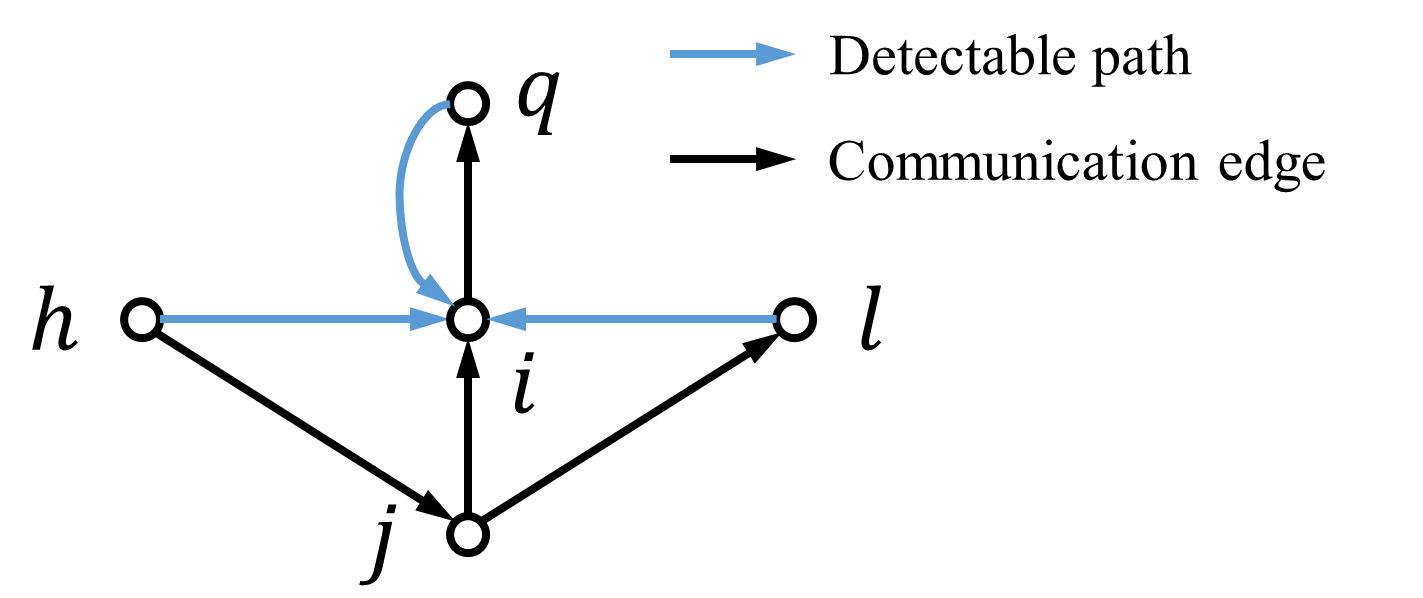}
	
	\vspace{-3pt}
	\caption{Illustration of the graph condition for Algorithm~3.} 
	\label{condition_s2}
\end{figure}

Although the above graph condition may require a locally dense graph structure, such a graph does not necessarily have a small diameter. This means that the path length of the shortest path between any two nodes may not be small. In fact, we can construct graphs satisfying the conditions in large scales. See the examples in Section~\ref{construction}.

We now present Algorithm~3. 
Each normal node $i$ performs majority voting on two things: the nodes' values and the detection information. Since we consider the $f$-local model in this section, if node $i$ receives the same information from at least $f+1$ distinct in-neighbors, it considers this information trustable. Additionally, node $i$ keeps a local set (only accessible to $i$) for the detection of two-hop in-neighbors as $\mathcal{A}_i^2[k]$ at time $k$.
After obtaining the true values of its one-hop in-neighbors $\forall j\in \mathcal{N}_i^-$ and two-hop in-neighbors $\forall h \in \mathcal{N}_i^{2-}$, it follows similar procedures as the ones in Algorithm~2.

\begin{algorithm}[t]
	\caption{Fully Distributed Detection Algorithm} 
	\LinesNumbered 
	\KwIn{$\Phi_j[k-1], \forall j\in \mathcal{N}_i^- \cup \{i\} $}
	
	\SetKwBlock{newbox}{Initialization:}{}
	\newbox{
		\SetAlgoVlined
		
		Node $i$ follows the initialization in Algorithm~2.
		
		Let $\mathcal{A}_i^2[1]= \emptyset$.
	}
	
	\For{ $k\geq 2$ }{
		
		\SetKwBlock{newbox}{Majority voting:}{}
		\newbox{
			\SetAlgoVlined
			
			Let $\mathcal{C}_i[k-1]=\mathcal{C}_i[k-1] \cup \{\delta_{jh}[k-1|k-1]$, $ \omega_{jh}[k-1|k-1], \forall h \in \mathcal{N}_i^{2-} \}$, i.e., the majority values of $h$ from $ \Phi_j[k-1]$, $j\in \mathcal{N}_i^-$.
			
			Let $\mathcal{A}_i[k]=\mathcal{A}_i[k-1]$, $\mathcal{A}_i^2[k]=\mathcal{A}_i^2[k-1]$. 
			
			\For{ ID $m\in $ majority of $\bigcup_{j\in \mathcal{N}_i^- } \mathcal{A}_j[k-1], $}{
				\eIf{   $m \in \mathcal{N}_i^-\cup \mathcal{N}_i^+$  }{
					let  $m \in \mathcal{A}_i[k]$, the faulty reporter $j' \in \mathcal{A}_i[k]$;
				}
				{let  $m \in \mathcal{A}_i^2[k]$, $j' \in \mathcal{A}_i[k]$.}
			}
		}
		
		\For{ $j\in \mathcal{M}_i^-[k]$ }{
			
			(Step 1a) \If{$\mathcal{N}_j^- \cap \mathcal{A}_j[k-1]$ contains any different detection of the nodes in $\mathcal{A}_i[k]\cup \mathcal{A}_i^2[k]$  }{
				let $j \in \mathcal{A}_i[k]$.
			}
			
			(Step 1b) \If{$\mathcal{N}_j^{2-} \cap \mathcal{A}_j[k-1]$ contains any ID not in $\mathcal{A}_i[k]\cup\mathcal{A}_i^2[k]$ or it does not contain a same ID in $\mathcal{A}_i[k]\cup\mathcal{A}_i^2[k]$ for the second time  }{
				let $j \in \mathcal{A}_i[k]$.
			}
			
			\textbf{Steps 2-4 in Algorithm 2} 
		}
		\KwOut{$\mathcal{A}_i[k]$}
		\textbf{Store:} $\delta_{jj}[k|k-1] $ and $\omega_{jj}[k|k-1] $ from $ \Phi_j[k-1]$, $ j\in \mathcal{N}_i^- \cup \{i\}$, into $\mathcal{C}_i[k]$.
	}
\end{algorithm}

\subsection{Necessary Graph Structure for Algorithm 3}

In Algorithm~3, we must impose the connectivity requirement in Assumption~\ref{condition2} on every node and its in-/out-neighbors. This enables the detection to be guaranteed for any combination of nodes being malicious in the network. The following theorem is the main result of this section.

\begin{theorem}\label{theorem_detect2} 
	Consider the directed network $\mathcal{G}=(\mathcal{V},\mathcal{E})$ under the $f$-local malicious model. Suppose that Assumptions \ref{neighborinfo} and \ref{cannotadd} hold. Then, for Algorithm~1 with Detection Algorithm~3, the following hold:
	
	(a) Each normal node detects all malicious nodes in its out-neighbors and in-neighbors within two hops that behave against the averaging in Algorithm~1 if and only if $\mathcal{G}$ satisfies the condition for Algorithm~3 (in Assumption~\ref{condition2}).
	
	(b) Under the condition of (a), normal nodes achieve resilient average consensus if $\mathcal{G}$ is $(f+1)$-strongly connected.
\end{theorem}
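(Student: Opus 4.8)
The plan is to prove part~(a) as an ``if and only if'' by first isolating a reconstruction lemma built on majority voting, and then to obtain part~(b) by reducing to Proposition~\ref{thm_consensus}. The central fact underlying everything is this: if node $h$ is detectable by a normal node $i$ in the sense of Definition~\ref{detectable}, then $i$ can recover both the value actually broadcast by $h$ (i.e.\ the genuine $\delta_{hh},\omega_{hh}$) and the genuine detection set $\mathcal{A}_h$. When $h\in\mathcal{N}_i^-$ this is immediate, since $i$ reads $\Phi_h$ directly. Otherwise there are at least $2f+1$ two-hop paths $h\to j\to i$ through distinct intermediate in-neighbors $j\in\mathcal{N}_i^-$. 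Under the $f$-local model at most $f$ of these relays are malicious, so at least $f+1$ of them forward the value actually broadcast by $h$, whereas the $\le f$ malicious relays can make at most $f$ copies agree on any single fabricated value; since $f+1>f$, the genuine value is the unique one attaining the $f+1$-agreement threshold used in Algorithm~3, regardless of how the malicious relays collude. I would prove this jointly with the update check below by induction on $k$, because the reconstruction at time $k$ presumes that the detection information propagated in the information sets up to time $k-1$ was itself correctly recovered, which the same voting argument supplies at the previous step.

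For sufficiency in~(a), I would then show that, armed with the genuine values and detection sets of every detectable node, node $i$ can execute each check of Algorithm~3 against the ground truth. Step~2 uses the two-hop topology knowledge of Assumption~\ref{neighborinfo} to catch a fabricated in-neighbor of $j$; Step~3 uses condition~1) of Assumption~\ref{condition2}, so that every $h\in\mathcal{N}_j^-$ is a detectable two-hop in-neighbor of $i$ and $i$ holds the true inputs $j$ must have used, exposing any tampering of $\delta_{jh},\omega_{jh}$; Steps~1a--1b compare $j$'s reported detection against the majority-reconstructed $\mathcal{A}_i[k]\cup\mathcal{A}_i^2[k]$. The crux is Step~4, the update check: to recompute $j$'s mandated output, $i$ needs both $j$'s genuine inputs (condition~1) and $j$'s effective out-degree $d_j^+[k]=|\mathcal{N}_j^+\setminus\mathcal{A}_j[k]|$, which forces $i$ to certify $j$'s detection of each out-neighbor $l$ of $j$; condition~3) guarantees each such $l$ is detectable by $i$, so $i$ can validate $d_j^+[k]$ and hence the compensation terms in \eqref{y_change_in}--\eqref{y_change_out}. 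Detection of a misbehaving out-neighbor $q$ follows from condition~2): $q$ being detectable lets $i$ reconstruct $q$'s true behavior and flag any deviation, which is exactly the trigger for Case~2. Any deviation from Algorithm~1 by an out-neighbor, an in-neighbor, or a two-hop in-neighbor therefore produces a mismatch visible to $i$. Cooperating neighboring malicious nodes are handled automatically, since the reconstructed ground truth is drawn from disjoint paths rather than from the suspected nodes themselves.

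For necessity I would argue by an indistinguishability construction: if some condition of Assumption~\ref{condition2} fails, I exhibit an $f$-local adversary placement and two admissible executions that induce identical information sets at $i$ yet differ in whether a monitored node cheats, so no rule can separate them. For example, if a two-hop in-neighbor $h$ with $h\notin\mathcal{N}_i^-$ has only $m\le 2f$ two-hop paths to $i$, the adversary corrupts $f$ of the relays to supply $f$ copies of a fabricated value against at most $f$ genuine copies, defeating the $f+1$ threshold; analogous constructions target an undetectable out-neighbor (condition~2) and an undetectable out-neighbor of an in-neighbor (condition~3). Part~(b) then reduces to Proposition~\ref{thm_consensus}: by~(a) each normal node detects all its malicious in- and out-neighbors (special cases of out-neighbors and two-hop in-neighbors), meeting the correct-local-detection hypothesis; and the set of \emph{actually} misbehaving nodes lies in the $f$-local adversary set, so by the definition of $(f+1)$-strong connectivity its removal leaves a strongly connected digraph, while an adversary that never deviates is absorbed into $\mathcal{N}_0$. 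Invoking Proposition~\ref{thm_consensus} on $\mathcal{N}_0$ gives convergence to $\overline{X}_{\mathcal{N}_0}$.

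The main obstacle I anticipate is Step~4 of sufficiency together with the induction on $k$ that it rests on: certifying $j$'s update requires $i$ to already trust $j$'s time-varying out-degree and detection set, which are themselves quantities propagated hop-by-hop through the network, so the reconstruction lemma and the update check cannot be established in isolation but must be proved simultaneously. The necessity constructions are comparatively routine once the $2f+1$ threshold is understood, though making the two indistinguishable executions fully consistent with the column-stochastic running-sum dynamics of Algorithm~1 will require some care.
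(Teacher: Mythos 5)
Your proposal is correct and follows essentially the same route as the paper's own proof: a majority-voting reconstruction over the $2f+1$ two-hop paths (with the $f$-local bound guaranteeing the genuine value uniquely attains the $f+1$ threshold), a step-by-step sufficiency argument culminating in the recomputation of $j$'s mandated update---including certifying $j$'s effective out-degree $d_j^+[k]$ via condition~3) of Assumption~\ref{condition2}---a counting argument ($f$ genuine versus $f$ fabricated copies) for necessity, and a reduction of part~(b) to Proposition~\ref{thm_consensus} using $(f+1)$-strong connectivity. Your explicit induction on $k$ and the indistinguishability framing of necessity are presentational sharpenings of what the paper does implicitly (the paper's temporal bootstrapping ``correct detection before the detection loop at time $k+1$'' plays the role of your induction hypothesis), not a genuinely different approach.
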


\begin{proof} \emph{(a) Necessity:} 
	We prove condition 1) by contradiction; conditions 2) and 3) follow by a similar proof.
	Suppose that there is a node $h\in \mathcal{N}_j^-$ with $h\notin \mathcal{N}_i^-$, and that there are at most $2f$ two-hop paths from node $h$ to node $i$ including the path containing node $j$. Suppose that node $j$ is malicious and there are $f-1$ malicious middle nodes in the paths from $h$ to $i$ (by the assumption of the $f$-local model).
	Then, in the worst case, node $i$ could get $f$ copies of the true value of $\delta_{hh}[k]$ from the $f$ normal middle nodes. In the mean time, node $i$ also gets $f$ copies of an identical false value of $\delta_{hh}[k]$ from the $f$ malicious middle nodes (including $j$). Thus, node $i$ cannot get majority regarding the true value of $\delta_{hh}[k]$. Thus, it cannot detect node $j$'s manipulation on $\delta_{jh}[k|k]$ in $\Phi_j[k]$.

	\emph{Sufficiency:} 
	By Definition~\ref{detectable}, if an out-neighbor $q$ of node $i$ is detectable by node $i$, then node $q$ becomes a direct in-neighbor or a two-hop in-neighbor of $i$. Therefore, the detection of node $q$ is the same as the detection of node $j$ or $h$ below.
	
	We must show that node $i$ can confirm the true value of every entry of the information set $\Phi_j[k]$ of in-neighbor $j$ by three major steps at time $k+1$. See the illustration in Fig.~\ref{condition_s2}. First, it can obtain the true values $\delta_{hh}[k]$, $ \omega_{hh}[k] $ of every neighbor $h\in \mathcal{N}_j^-$ (i.e., $i$'s two-hop in-neighbor $h$) from the previous time $k$. Second, node $i$ can obtain the correct detection of $h\in \mathcal{N}_j^-$ before the detection loop at time $k$. Moreover, node $i$ can obtain the correct detection of $j$'s out-neighbor $l$ depending on the corresponding case ($l$ can be a two-hop in-neighbor or a direct in-neighbor of $i$ by condition 3) in Assumption~\ref{condition2}). Then we can prove that node $i$ will detect node $j$ at time $k+1$ if node $j$ sends out faulty $\Phi_j[k]$.

	Depending on how the detectable path is formed, consider two cases for node $i$: (i) nodes $h$ and $l$ are direct in-neighbors of $i$; (ii) nodes $h$ and $l$ are two-hop in-neighbors of $i$.
	
	(i) In the case where $h, l\in \mathcal{N}_i^-$, it is clear that node $i$ can receive the true values $\delta_{hh}[k]$ and $ \omega_{hh}[k]$ from $\Phi_h[k-1]$. Moreover, it can have the correct detection of its direct in-neighbors $h$ and $l$ before time $k$. 
	
	(ii) Suppose that $h\notin \mathcal{N}_i^-$, and there are at least $2f+1$ two-hop paths from $h$ to $i$. In this case, there is some normal node $p\in \mathcal{N}_h^+ \cap \mathcal{N}_i^-$ which carries the true values $\delta_{hh}[k]$ and $ \omega_{hh}[k]$ in its information set $\Phi_p[k]$. 
	Then, node $i$ can get the true values $\delta_{hh}[k]$ and $ \omega_{hh}[k]$ since the majority of the $2f+1$ paths from $h$ to $i$ contain nodes as $p$ by the $f$-local model.

	For node $i$ to obtain the correct detection of two-hop in-neighbors $h$ and $l$, it follows a similar analysis. We look at the case for $h$.
	If $h$ transmits faulty $\Phi_h[k-1]$, then it is detected by its one-hop neighbors at time $k$. Recall that there are at least $2f+1$ directed two-hop paths from $h$ to $i$. Thus, under the $f$-local model, node $i$ can obtain the correct detection of $h$ by majority voting before the detection loop of time $k+1$.

	Therefore, node $i$ knows the true values $\delta_{hh}[k]$, $ \omega_{hh}[k]$ and obtains the correct detection of its two-hop in-neighbors $h$, $l$ before running the detection loop at time $k+1$. Thus if node $j\in \mathcal{N}_i^-$ sends out faulty $\Phi_j[k]$ by possible manipulation including modifying $\delta_{jh}[k]$ and/or $ \omega_{jh}[k]$ in $\Phi_j[k]$, or by sending false detection information of nodes $h$ and $l$, then node $i$ will detect. Note that when out-neighbor $l$ is a two-hop in-neighbor of $j$, the detection of $l$ is included in $\mathcal{A}_j[k+1]$ and is sent to node $i$ in $\Phi_j[k+1]$. Therefore, step 1b in Algorithm~3 is designed to handle this case. This procedure will not cause problems since the removal of malicious neighbors can be asynchronous at each normal agent.

	Next, we show that node $i$ can verify if node $j$ updates $\delta_{jj}[k+1]$ and  $\omega_{jj}[k+1]$ in $\Phi_j[k]$ by the averaging in Algorithm~1 or not. This is done by reconstructing $\lambda_j'[k+1]$ and $\gamma_j'[k+1]$ and checking whether $\epsilon_\lambda=\epsilon_\gamma=0$, where
	\vspace*{-1.0mm}
	\begin{equation*}
		\begin{aligned} 
			\delta_{jj}[k+1]&=\lambda_j'[k+1]+\epsilon_\lambda,\\
			\omega_{jj}[k+1]&=\gamma_j'[k+1]+\epsilon_\gamma.
		\end{aligned}
	\end{equation*}
	As shown above, node $i$ can verify $\mathcal{A}_j[k]$ in $\Phi_j[k]$. Thus,
	\begin{equation*}
		\begin{aligned} 
			\lambda_j'[k] \medspace\medspace &= \medspace\medspace \delta_{jj}[k-1|k-1] \\ 
			& \medspace\medspace\medspace +  \frac{y_j'[k-1] + |\Delta \mathcal{M}_j^+[k-1]|\delta_{jj}[k-1|k-1]}{1+d_j^+[k-1]} ,  \\
			\gamma_j'[k] \medspace\medspace &= \medspace\medspace \omega_{jj}[k-1|k-1] \\ 
			& \medspace\medspace\medspace +  \frac{z_j'[k-1] + |\Delta \mathcal{M}_j^+[k-1]|\omega_{jj}[k-1|k-1]}{1+d_j^+[k-1]} ,  \\
		\end{aligned}
	\end{equation*}
	where $\delta_{jj}[k-1|k-1]$, $\omega_{jj}[k-1|k-1]$, $\Delta \mathcal{M}_j^+[k-1]$, $d_j^+[k-1]$ are from $\Phi_j[k-1]$ with $\mathcal{A}_j[k-1]$. Moreover, $y_j'[k-1]$ and $z_j'[k-1]$ are obtained by node $i$ through
	\begin{equation*}
		\begin{aligned} 
			y_j'[k-1] \medspace\medspace &= \medspace\medspace (\delta_{jj}[k|k-1]-\delta_{jj}[k-1|k-1]) \\ 
			& \medspace\medspace\medspace \times (1+ d_j^+[k-1]), \\
			z_j'[k-1] \medspace\medspace &= \medspace\medspace (\omega_{jj}[k|k-1]-\omega_{jj}[k-1|k-1]) \\ 
			& \medspace\medspace\medspace \times (1+ d_j^+[k-1]). \\
		\end{aligned}
	\end{equation*}	
	
	We also note that node $i$ has access to the true values $\delta_{hh}[k]$ and $ \omega_{hh}[k]$. Besides, node $i$ knows $\delta_{jh}[k-1]=\delta_{hh}[k-1]$ and $ \omega_{jh}[k-1]=\omega_{hh}[k-1]$ from $\Phi_j[k-1]$. Otherwise, node $j$ would have been detected at time $k$. Thus, we have 
	\begin{equation*}
		\begin{aligned}
			\lambda_j'[k+1] \medspace\medspace &=\medspace\medspace \lambda_j'[k] + \sum\limits_{h\in \mathcal{N}_j^- \cup \{j\}}{(\delta_{jh}[k]-\delta_{jh}[k-1])},  \\
			\gamma_j'[k+1] \medspace\medspace &=\medspace\medspace \gamma_j'[k] + \sum\limits_{h\in \mathcal{N}_j^- \cup \{j\}}{(\omega_{jh}[k]-\omega_{jh}[k-1])},  
		\end{aligned}
	\end{equation*}
	where $\delta_{jh}[k]=0$ and $\omega_{jh}[k]=0$ for $h\in \mathcal{A}_j[k]$. Then node $i$ can compare $\delta_{jj}[k+1]$ (and $\omega_{jj}[k+1]$) in $\Phi_j[k]$ with $\lambda_j'[k+1]$ (and $\gamma_j'[k+1]$) and checks if node $j$ follows the averaging in Algorithm~1 or not.

	\emph{(b)} A malicious node will be detected immediately once it manipulates its information set. Thus, misbehavior of any malicious node cannot affect normal nodes since normal nodes exclude values from detected malicious nodes in Algorithm~1. Moreover, since $\mathcal{G}$ is $(f+1)$-strongly connected, the normal network is strongly connected after removing the $f$-local adversary node set. Therefore, resilient average consensus is achieved as shown in Proposition~\ref{thm_consensus}.
\end{proof}

Here, we show that the graph satisfying the condition for Algorithm~3 has the minimum in-degree as $2f+1$. It indicates that we need to make the minimum in-degree no less than $2f+1$ when we design a desirable network topology.
Conversely, it is straightforward that a graph with minimum in-degree less than $2f+1$ does not meet our condition.
We formally state the property in the next lemma for general strongly connected digraphs since strong connectivity is necessary for achieving average consensus in directed graphs \cite{hadjicostis2015robust}.
Moreover, we can confirm that a complete graph $\mathcal{K}_n$ satisfies the condition for Algorithm~3. To avoid trivial cases, we consider $n>3$ in the following discussions.

\begin{lemma} \label{directed_minimum_degree}
	If a strongly connected and incomplete digraph $\mathcal{G} = (\mathcal{V},\mathcal{E})$ (under the $f$-local model) satisfies the condition for Algorithm~3 in Assumption~\ref{condition2}, then $\mathcal{G}$ has the minimum in-degree no less than $2f+1$.
\end{lemma}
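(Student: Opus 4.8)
The plan is to fix an arbitrary node $i\in\mathcal{V}$ and prove $d_i^-\geq 2f+1$, treating separately the degenerate case where $i$ already receives from every other node. The workhorse will be condition~1) of Assumption~\ref{condition2} together with Definition~\ref{detectable}: if $i$ possesses a two-hop in-neighbor $h$ that is \emph{not} a direct in-neighbor, then $h$ must be detectable by $i$ through the second bullet of Definition~\ref{detectable}, so there exist at least $2f+1$ two-hop paths from $h$ to $i$. Since all such paths share the endpoints $h$ and $i$, they must differ in their middle nodes, yielding $2f+1$ \emph{distinct} middle nodes, each an in-neighbor of $i$; hence $d_i^-\geq 2f+1$ immediately.

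The key combinatorial step is therefore to guarantee the existence of such a ``distant'' two-hop in-neighbor. First I would establish the following reachability claim: if $\mathcal{N}_i^-\cup\{i\}\neq\mathcal{V}$ (that is, $d_i^-<n-1$), then $i$ has a two-hop in-neighbor outside $\mathcal{N}_i^-$. I would argue the contrapositive by setting $S=\mathcal{N}_i^-\cup\{i\}$ and noting that if every two-hop in-neighbor of $i$ lies in $\mathcal{N}_i^-$, then $\mathcal{N}_j^-\subseteq S$ for every $j\in\mathcal{N}_i^-$. Taking any $w\notin S$ and a path from $w$ to $i$ (which exists by strong connectivity), the first node $v_t$ on this path lying in $S$ has a predecessor $v_{t-1}\notin S$ with $v_{t-1}\in\mathcal{N}_{v_t}^-$; whether $v_t=i$ or $v_t\in\mathcal{N}_i^-$, the closure property forces $v_{t-1}\in S$, a contradiction. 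Hence $S=\mathcal{V}$, which proves the claim by contraposition.

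It remains to dispose of the full-in-degree case $d_i^-=n-1$, which is where incompleteness enters. Since $\mathcal{G}$ is incomplete, some ordered pair $(a,b)$ with $a\neq b$ is a non-edge, so $d_b^-\leq n-2<n-1$; applying the reachability claim and the detectability argument to $b$ yields $2f+1$ distinct middle nodes, all different from $b$ and from the source $h\neq b$, whence $n\geq 2f+3$. Consequently any node with $d_i^-=n-1$ automatically satisfies $d_i^-=n-1\geq 2f+2\geq 2f+1$, while any node with $d_i^-<n-1$ is covered directly by the detectability argument. Combining the two cases gives $d_i^-\geq 2f+1$ for all $i$, i.e.\ the minimum in-degree of $\mathcal{G}$ is at least $2f+1$.

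I would expect the main obstacle to be the reachability claim: one must rule out the possibility that $i$'s two-hop in-neighborhood collapses into its one-hop in-neighborhood while $i$ fails to be connected to all nodes, and the ``first entry into $S$'' argument is the delicate point that leans essentially on strong connectivity. The full-in-degree corner case is then straightforward once the bound $n\geq 2f+3$ has been extracted from incompleteness, and I note that conditions~2) and 3) of Assumption~\ref{condition2} are not needed for this in-degree estimate.
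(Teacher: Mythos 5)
Your proof is correct, and it takes a genuinely different route from the paper's. Both arguments share the same elementary engine---$2f+1$ two-hop paths between a fixed pair of endpoints must have $2f+1$ distinct middle nodes, each an in-neighbor of the terminal node---but the paper deploys it as a global structural classification: it first shows that in a graph where \emph{no} node has in-degree above $2f$, conditions 1) and 2) of Assumption~\ref{condition2} force all relevant relations to be bidirectional edges, so the only admissible configuration is a clique; it then argues that strong connectivity of an incomplete graph requires joining such cliques, and the joining edges turn cross-clique nodes into two-hop neighbors, driving in-degrees up to $2f+1$. You instead argue per node, with a dichotomy on $d_i^-$: your reachability claim (the closure/first-entry argument, which is exactly where strong connectivity enters) shows that any node with $d_i^-<n-1$ has a two-hop in-neighbor that is not a direct in-neighbor, so the second bullet of Definition~\ref{detectable} applies with full force and gives $d_i^-\geq 2f+1$ outright; the remaining case $d_i^-=n-1$ is settled by extracting $n\geq 2f+3$ from incompleteness applied at some other node $b$ with $d_b^-\leq n-2$. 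Your route buys rigor and economy: it uses only condition 1) of Assumption~\ref{condition2}, needs no figures, and makes explicit a step the paper leaves implicit---the paper's stage-one classification assumes low in-degree at \emph{all} nodes simultaneously, whereas the lemma must rule out a violation at a single node, which your per-node dichotomy handles directly. What the paper's route buys is structural insight: the clique characterization exhibits what the extremal low-degree configurations look like and motivates the layered constructions of Section~\ref{construction}, which your argument does not reveal.
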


The proof of Lemma~\ref{directed_minimum_degree} can be found in the Appendix.

We note that the detection condition (a) for Algorithm~3 is not sufficient to guarantee strong connectivity of the graph. A simple counter example is a graph with two disconnected complete subgraphs. Clearly, the whole graph is not connected while it meets the detection condition for Algorithm~3. This observation reveals that the consensus condition (b) guaranteeing the normal network to be strongly connected is also critical for our RAC algorithm.

\subsection{Graph Condition in Undirected Networks}\label{discuss2}

For the special case of undirected networks, the condition for Algorithm 3 can be simplified as follows.

\begin{lemma} \label{undirected_condition2}
	An undirected graph $\mathcal{G} = (\mathcal{V},\mathcal{E})$ satisfies the condition for Algorithm~3 in Assumption~\ref{condition2} if for each node $i \in \mathcal{V}$, it holds that any two-hop in-neighbor $h$ of node $i$ is detectable by node $i$.
\end{lemma}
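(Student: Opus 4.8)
The plan is to exploit the defining symmetry of undirected graphs---namely that every edge is bidirectional, so $\mathcal{N}_i^- = \mathcal{N}_i^+$ for every node $i$---and to show that under this symmetry conditions 2) and 3) of Assumption~\ref{condition2} either become trivial or collapse onto condition 1). First I would record the elementary consequences of bidirectionality: the in-neighbor and out-neighbor sets of each node coincide, and the two-hop in-neighbor relation is symmetric, since any path $(h,w,i)$ admits the reverse path $(i,w,h)$. With these facts in hand, the hypothesis (condition 1) on two-hop in-neighbors is already at our disposal.

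For condition 2), I would argue directly and without even using the hypothesis: an out-neighbor $q$ of node $i$ is, by $\mathcal{N}_i^+ = \mathcal{N}_i^-$, also a direct in-neighbor of $i$, so it is detectable by $i$ via the first bullet of Definition~\ref{detectable}. Hence condition 2) holds automatically in the undirected setting.

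For condition 3), I would carry out a brief case analysis on an out-neighbor $l$ of an in-neighbor $j$ of node $i$, setting aside the degenerate case $l=i$ in which node $i$ already possesses its own data. If $l$ is adjacent to $i$, then it is a direct in-neighbor of $i$ and detectable by the first bullet of Definition~\ref{detectable}. Otherwise $l$ is not adjacent to $i$; since $j\in\mathcal{N}_i^-$ gives $(j,i)\in\mathcal{E}$ and $l\in\mathcal{N}_j^+$ gives $(l,j)\in\mathcal{E}$ by undirectedness, the three distinct nodes $l,j,i$ form the two-hop path $(l,j,i)$, so $l$ is a two-hop in-neighbor of $i$ and is therefore detectable by the assumed condition~1). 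As condition~1) is itself the hypothesis, all three conditions of Assumption~\ref{condition2} follow, which is what the lemma asserts.

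I do not expect a genuine obstacle here; the argument is essentially bookkeeping on the bidirectionality of undirected edges. The only point requiring mild care is confirming that the nodes in the path $(l,j,i)$ are pairwise distinct---which uses the standard absence of self-loops together with the assumption $l\notin\mathcal{N}_i\cup\{i\}$---so that it is a legitimate two-hop path and the hypothesis on two-hop in-neighbors can be invoked cleanly for condition~3).
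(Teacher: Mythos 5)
Your proposal is correct and follows essentially the same route as the paper's proof: condition 2) of Assumption~\ref{condition2} holds automatically by edge bidirectionality, and condition 3) reduces to condition 1) because an out-neighbor $l$ of an in-neighbor $j$ is (up to trivial cases) a two-hop in-neighbor of $i$ in an undirected graph. Your explicit handling of the degenerate cases $l=i$ and $l\in\mathcal{N}_i^-$ is a welcome bit of extra rigor that the paper glosses over, but it does not change the argument.
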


\begin{proof}
	We can easily observe that condition 2) in Assumption~\ref{condition2} is satisfied automatically in undirected networks. Moreover, it holds that an out-neighbor $l$ of node $i$'s in-neighbor $j$ is a two-hop in-neighbor of node $i$ in undirected networks. Therefore, condition 3) can be derived if condition 1) holds in an undirected network.
\end{proof}

Next, we show that for undirected networks, the condition for Algorithm~3 and connectivity of the graph together guarantee that the normal network is connected after removing the $f$-local malicious node set.

\begin{proposition} \label{undirected_simple}
	Consider the undirected graph $\mathcal{G} = (\mathcal{V},\mathcal{E})$ under the $f$-local model. If (i) $\mathcal{G}$ is connected and (ii) for any node $i \in \mathcal{V}$, it holds that any two-hop in-neighbor $h$ of node $i$ is detectable by node $i$, then the normal network induced by the normal agents in $\mathcal{N} \subseteq \mathcal{V}$ is connected.
\end{proposition}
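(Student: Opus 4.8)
The plan is to prove that $\mathcal{G}[\mathcal{N}]$, the subgraph induced by the normal nodes, is connected by showing that any two normal nodes $u,v$ are joined by a path lying entirely in $\mathcal{N}$. Since $\mathcal{G}$ is connected by hypothesis~(i), there is at least some $u$--$v$ path in $\mathcal{G}$, and the whole argument reduces to a procedure that rewrites such a path into one avoiding the adversary set $\mathcal{A}$. The workhorse is the following local fact, which I would establish first: if $x\in\mathcal{N}$ and $w$ share a common neighbor in $\mathcal{G}$ (so $w$ is a two-hop in-neighbor of $x$), then either $x\sim w$, or $x$ and $w$ have at least $2f+1$ common neighbors. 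In the latter case all of these common neighbors are neighbors of the normal node $x$, so under the $f$-local model at most $f$ of them are adversarial, leaving at least $f+1\ge 1$ normal common neighbors. Specializing to $x,w$ both normal gives the key \emph{bridging lemma}: any two normal nodes sharing a common neighbor are connected in $\mathcal{G}[\mathcal{N}]$ by a path of length at most two. This is exactly where hypothesis~(ii), two-hop detectability in the undirected setting (cf.\ Definition~\ref{detectable} and Lemma~\ref{undirected_condition2}), enters.

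Next I would convert an arbitrary $\mathcal{G}$-path $u=w_0,w_1,\dots,w_m=v$ with normal endpoints into a normal path by eliminating maximal runs of adversarial internal nodes one block at a time. Consider a maximal block $w_i,\dots,w_j\in\mathcal{A}$ bounded by normal nodes $w_{i-1},w_{j+1}\in\mathcal{N}$. I would \emph{walk along} the block, maintaining a normal node $p_k$ adjacent to $w_{k+1}$ and connected in $\mathcal{G}[\mathcal{N}]$ to $w_{i-1}$: start from $p_{i-1}=w_{i-1}$, which is adjacent to $w_i$; at the step producing $p_k$ (for $k=i,\dots,j-1$), the nodes $p_{k-1}$ and the adversarial $w_{k+1}$ share the common neighbor $w_k$, so by the local fact either $p_{k-1}\sim w_{k+1}$ (set $p_k=p_{k-1}$) or there is a normal common neighbor $p_k$ with $p_k\sim p_{k-1}$ and $p_k\sim w_{k+1}$. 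In both cases $p_k$ is normal, adjacent to $w_{k+1}$, and joined to $w_{i-1}$ inside $\mathcal{N}$. When the block ends, $p_{j-1}$ and the normal node $w_{j+1}$ share the common neighbor $w_j$, so the bridging lemma connects them in $\mathcal{G}[\mathcal{N}]$. Chaining these links shows $w_{i-1}$ and $w_{j+1}$ are connected within $\mathcal{N}$, so the adversarial block can be replaced by a normal detour. Repeating over all adversarial blocks, and using that a normal edge between consecutive normal nodes is already in $\mathcal{G}[\mathcal{N}]$, yields a $u$--$v$ path inside $\mathcal{N}$.

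The main obstacle is precisely the presence of \emph{neighboring adversaries}: a single adversary separating two normal nodes is dispatched immediately by the bridging lemma, but the $f$-local model permits arbitrarily long chains $w_i,\dots,w_j$ of mutually adjacent adversarial nodes, and across such a chain $u$ and the far endpoint need not be two-hop neighbors, so the bridging lemma cannot be invoked in one shot. The walking construction is designed to overcome this by never jumping the whole block at once; instead it advances one adversarial node at a time, each advance being a legitimate two-hop step for which hypothesis~(ii) supplies a normal shortcut. I would take care to verify that each $p_k$ is distinct from the endpoints of the path it is spliced into, so every invoked subpath is genuine, and that the process terminates because each block is finite and the index $k$ strictly increases. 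It would be cleanest to package the conclusion as a contradiction: assume $\mathcal{G}[\mathcal{N}]$ is disconnected, pick normal $u,v$ in distinct components at minimum $\mathcal{G}$-distance, and observe that the walking construction connects them inside $\mathcal{N}$, contradicting the choice and thereby proving that the normal network is connected.
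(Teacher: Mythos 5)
Your proposal is correct, and it takes a genuinely different route from the paper. The paper argues structurally: using its Lemma~\ref{directed_minimum_degree} (minimum degree at least $2f+1$), it shows every node lies in a ``minimum subgraph'' that stays connected after deleting any $f$-local set, and then glues two such subgraphs via a case analysis on where the connecting edge sits (three subcases, with the hardest case requiring extra detectable paths forced by the two-hop condition); the argument leans heavily on figures and the informal notion of a minimum subgraph. You instead prove a crisp \emph{bridging lemma} --- detectability plus the $f$-local bound on the neighbors of a normal node implies that any two normal nodes with a common neighbor are joined inside $\mathcal{G}[\mathcal{N}]$ by a path of length at most two --- and then rewrite an arbitrary connecting path by walking across each maximal adversarial block one node at a time, maintaining a normal ``escort'' $p_k$ adjacent to the current block node and connected to the block's left endpoint within $\mathcal{N}$. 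Your walking construction is precisely what handles the chains of mutually adjacent adversaries that the $f$-local model permits, and it does so without ever needing the subgraph machinery; it is more elementary, self-contained, and arguably more rigorous than the paper's sketch. What the paper's approach buys in exchange is structural insight: it exhibits the local robustness of the clique-like neighborhoods (connectivity after removing \emph{any} $f$-local set), tying the proposition back to the degree lemma, whereas your argument only extracts the single normal common neighbor needed at each step. Two cosmetic points: the final framing by contradiction is unnecessary (your construction directly connects any two normal nodes), and distinctness of the $p_k$ from other path nodes need not be verified, since a normal walk already implies a normal path.
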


The proof of Proposition~\ref{undirected_simple} can be found in the Appendix.

We can see from the above two results that both the detection condition (a) and consensus condition (b) for undirected networks are simplified compared to the conditions in Theorem~\ref{theorem_detect2} for directed networks.
We formally state the conditions for undirected networks as follows, which can be proved by Lemma~\ref{undirected_condition2} and Proposition~\ref{undirected_simple}.

\begin{theorem}\label{detect2_undirected} 
	Consider the undirected network $\mathcal{G}=(\mathcal{V},\mathcal{E})$ under the $f$-local malicious model. Suppose that Assumptions \ref{neighborinfo} and \ref{cannotadd} hold. Then, for Algorithm~1 with Detection Algorithm~3, the following hold:

	(a) Each normal node detects all malicious neighbors within two hops that behave against the averaging in Algorithm~1 if and only if for any node $i \in \mathcal{V}$, it holds that any two-hop neighbor $h$ of node $i$ is detectable by node $i$.
	
	(b) Under the condition of (a), normal nodes achieve resilient average consensus if $\mathcal{G}$ is connected.
\end{theorem}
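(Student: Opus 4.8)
The plan is to derive this theorem as a specialization of the directed-network result Theorem~\ref{theorem_detect2} to undirected graphs, combining the simplification in Lemma~\ref{undirected_condition2} with the connectivity result in Proposition~\ref{undirected_simple}. The recurring observation is that in an undirected graph the edge relation is symmetric, so in-neighbors, out-neighbors, and neighbors coincide; in particular a two-hop in-neighbor is the same as a two-hop neighbor, and the set ``malicious nodes in its out-neighbors and in-neighbors within two hops'' from Theorem~\ref{theorem_detect2}(a) coincides with ``malicious neighbors within two hops'' here. Thus the three clauses of the graph condition in Assumption~\ref{condition2} collapse onto the single clause retained in the present statement.

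For part (a), I would prove the two directions separately. For sufficiency (the ``if'' direction) I would start from the hypothesis that every two-hop neighbor $h$ of each node $i$ is detectable by $i$. By Lemma~\ref{undirected_condition2} this single condition already implies the full graph condition of Assumption~\ref{condition2}, since its conditions 2) and 3) are automatic in undirected graphs. Invoking the sufficiency part of Theorem~\ref{theorem_detect2}(a) then yields that each normal node detects all malicious nodes among its in- and out-neighbors within two hops, which by symmetry is exactly detection of all malicious two-hop neighbors. For necessity (the ``only if'' direction) I would use the necessity part of Theorem~\ref{theorem_detect2}(a): if detection of all malicious two-hop neighbors holds, then Assumption~\ref{condition2} must hold, and in particular its condition 1), namely that every two-hop in-neighbor---equivalently, every two-hop neighbor---is detectable. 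This closes the equivalence.

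For part (b) I would assume the detection condition of (a) together with connectivity of $\mathcal{G}$, and argue in two steps. First, Proposition~\ref{undirected_simple} applies under exactly these hypotheses and yields that the normal network induced by $\mathcal{N}$ is connected; since the graph is undirected, connectivity is equivalent to strong connectivity of the normal network. Second, part (a) guarantees that each normal node correctly detects all of its malicious in- and out-neighbors, which is the detection premise of Proposition~\ref{thm_consensus}. With both the correct-detection premise and strong connectivity of the normal network in hand, Proposition~\ref{thm_consensus} delivers convergence of the normal nodes to $\overline{X}_{\mathcal{N}_0}$, i.e., resilient average consensus.

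I do not expect a serious analytical obstacle, since the statement is essentially an assembly of results already established. The point deserving most care is why the undirected case in (b) needs only plain connectivity, whereas the directed case in Theorem~\ref{theorem_detect2}(b) requires $(f+1)$-strong connectivity: the entire gain comes from Proposition~\ref{undirected_simple}, where the $2f+1$ two-hop-path redundancy implied by the detection condition compensates for the removal of an $f$-local adversary set, so I would cite that proposition rather than re-derive normal-network connectivity by hand. The remaining subtlety is purely notational---keeping the symmetric identification of in-, out-, and two-hop neighbors explicit so that the translation of Theorem~\ref{theorem_detect2} and Assumption~\ref{condition2} into the undirected statement is unambiguous.
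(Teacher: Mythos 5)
Your proposal is correct and follows essentially the same route as the paper, which proves Theorem~\ref{detect2_undirected} by combining Lemma~\ref{undirected_condition2} (collapsing Assumption~\ref{condition2} to its first clause in undirected graphs) with Proposition~\ref{undirected_simple} for normal-network connectivity, then invoking Theorem~\ref{theorem_detect2} and Proposition~\ref{thm_consensus}. Your write-up is in fact more explicit than the paper's one-sentence justification, but the decomposition and the cited ingredients are identical.
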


\subsection{Construction of Graphs Satisfying the Condition}\label{construction}

\begin{figure}[t]
	\centering
	
	\subfigure[\scriptsize{The four-layer directed graph which contains an undirected graph after removing the red directed edges.}]{
		\includegraphics[width = 4.6cm ]{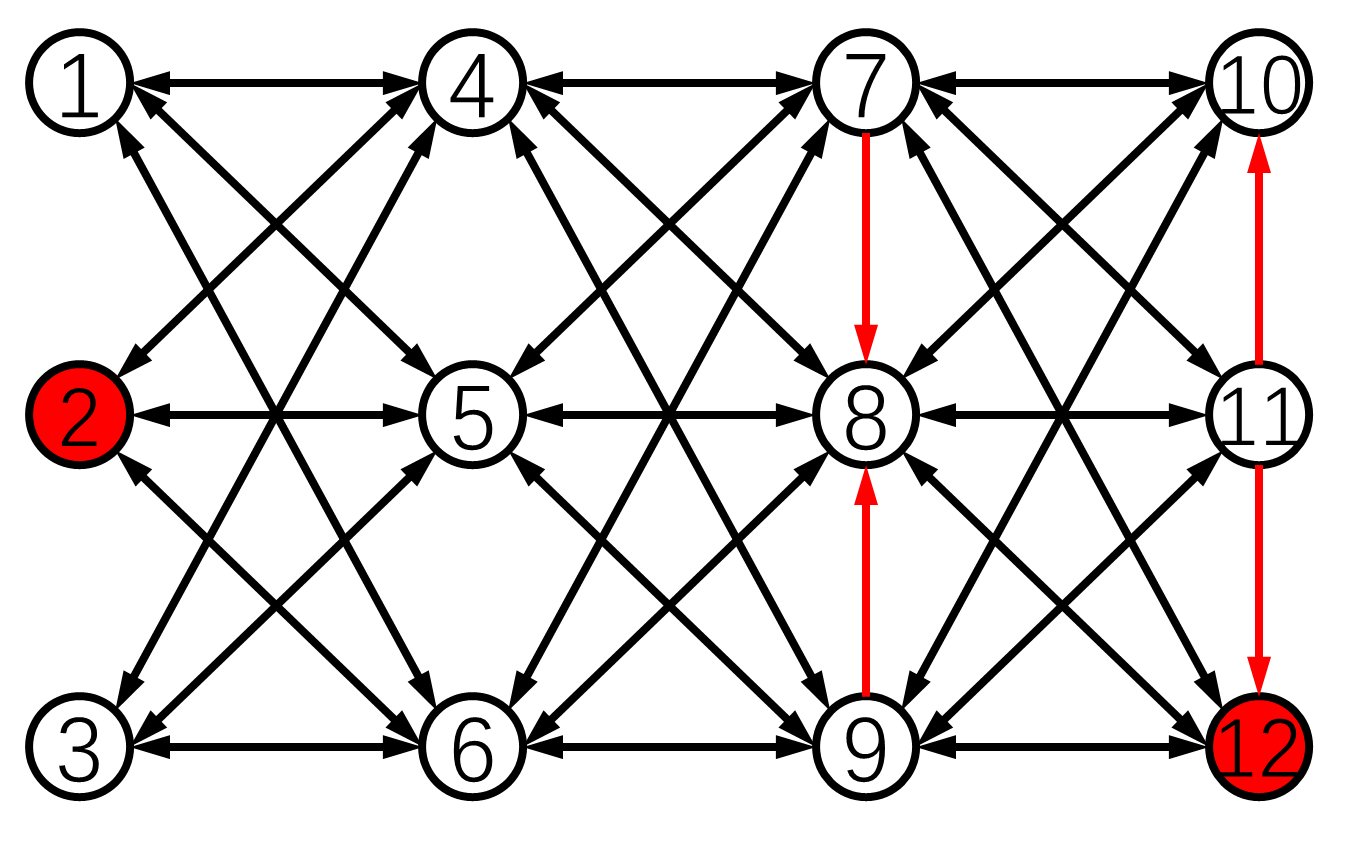}
	}
	
	\vspace{-3pt}
	
	\subfigure[\scriptsize{The five-layer directed graph.}]{
		\includegraphics[width = 5.7cm ]{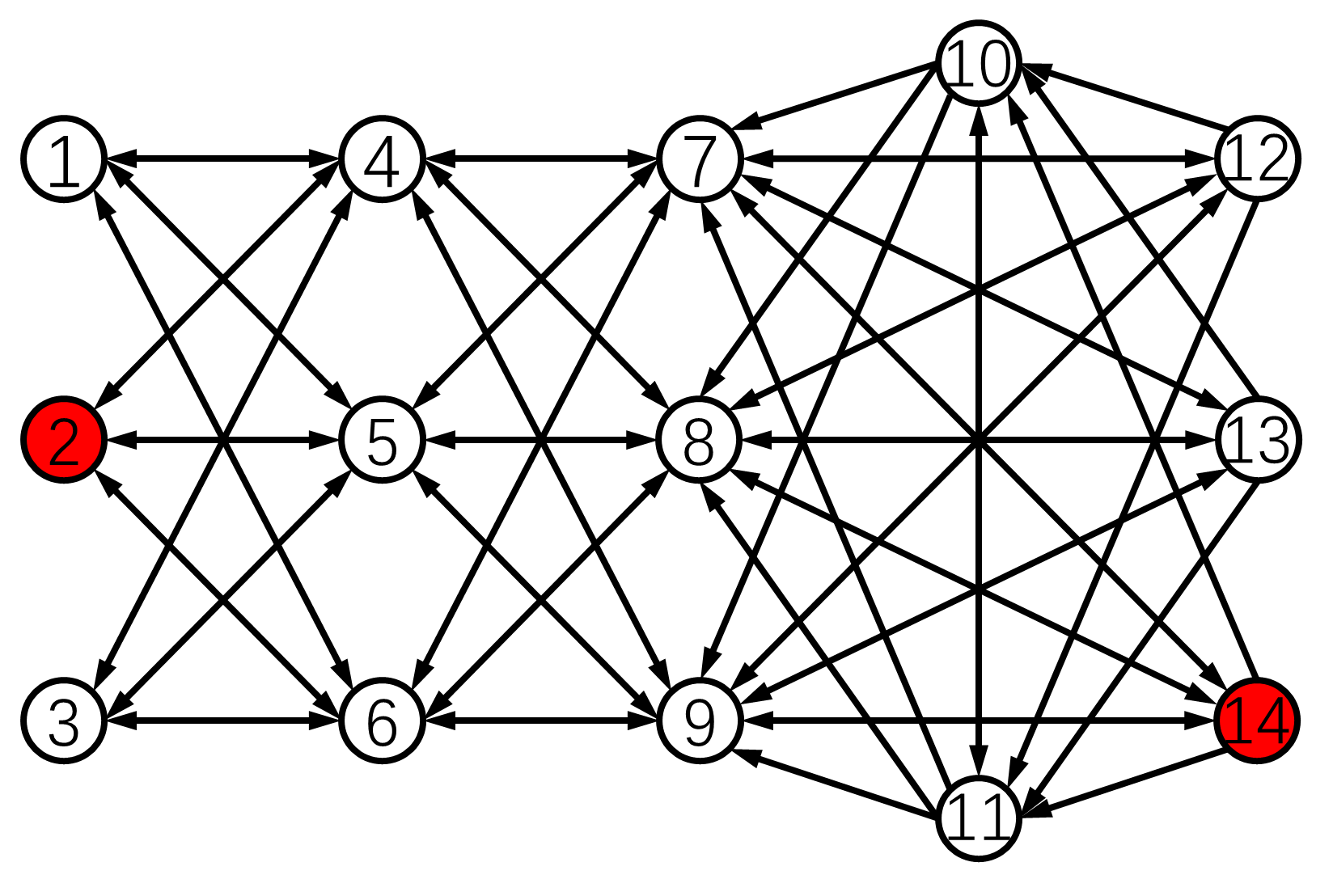}
	}
	
	\vspace{-3pt}
	\caption{Large-scale graphs satisfying the condition for Algorithm~3 under the $f$-local model. Here, we set $f=1$ for illustration.}
	\label{large_graph}
\end{figure}

In this part, we present some example graphs satisfying the conditions for our algorithms. Furthermore, we present a systematic way to construct large-scale graphs that meet the condition for Algorithm~3.

We first give example graphs satisfying the conditions in Theorem~\ref{theorem_detect2}.
The network in Fig.~\ref{large_graph}(a) satisfies the conditions for Algorithm~3 under the $1$-local model.
Moreover, there is a characteristic four-layer structure. We can extend this idea to the cases with any $f$. Each layer should have $2f+1$ nodes for the $f$-local model. Each node in one layer should be connected with every node in the neighbor layers and have no connection with the nodes in its own layer. This structure can also have many layers as long as the $f$-local set is met for $\forall i\in \mathcal{N}$. A similar way for constructing large-scale directed networks for Algorithm~3 is presented in Fig.~\ref{large_graph}(b).
From these examples, we can conclude that the unbalanced directed graphs can also meet the condition for Algorithm~3.
Simulations of Algorithm~3 in these networks will be given in Section~\ref{sec_simulation}.

Node $i$ is said to be a full access node if it is an out-neighbor of all other nodes in the network \cite{yuan2021secure}. Notice that such a node can detect any malicious node in the network.
We can enhance the performance of Algorithm~3 (and Algorithm~2) by incorporating nodes with such characteristics. 
However, we emphasize that we do not assume such full access nodes to be normal. As long as the conditions for Algorithm~3 (or Algorithm~2) are met, a full access node can also be detected by its normal neighbors when it misbehaves. This result can be easily proved by Theorem \ref{theorem_detect2} and is given as follows.

\begin{corollary}\label{fullaccessnode}
	A normal full access node using Algorithm~3 (or Algorithm~2) detects any node behaving against the averaging in Algorithm~1 in the network.
\end{corollary}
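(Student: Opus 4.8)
The plan is to reduce Corollary~\ref{fullaccessnode} directly to the detection condition (a) already established in Theorem~\ref{theorem_detect2}, by showing that for a full access node the graph condition in Assumption~\ref{condition2} is met trivially for \emph{every} other node in the network, not merely for its two-hop in-neighbors and out-neighbors. First I would recall the definition of a full access node $i$: by assumption $i$ is an out-neighbor of all other nodes, equivalently every node $v \in \mathcal{V}\setminus\{i\}$ is an in-neighbor of $i$, so that $\mathcal{N}_i^- = \mathcal{V}\setminus\{i\}$. The key observation is then immediate: any node $h$ that could possibly misbehave against the averaging in Algorithm~1 satisfies $h \in \mathcal{N}_i^-$, and hence by the first bullet of Definition~\ref{detectable}, $h$ is automatically \emph{detectable} by node $i$ without any need for the $2f+1$ two-hop path condition.

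Next I would verify that each of the three conditions in Assumption~\ref{condition2} holds for node $i$. Condition~1) asks that any two-hop in-neighbor $h$ be detectable by $i$; but every such $h$ is already a direct in-neighbor, so detectability follows from the first bullet of Definition~\ref{detectable}. Condition~2) asks that any out-neighbor $q$ be detectable; since $q$ is also a direct in-neighbor of the full access node $i$ (every node is an in-neighbor), this holds as well. Condition~3) asks that any out-neighbor $l$ of an in-neighbor $j$ be detectable by $i$; again $l \in \mathcal{V}\setminus\{i\}$ implies $l \in \mathcal{N}_i^-$, so it is directly detectable. Thus the graph condition for Algorithm~3, evaluated at the full access node $i$, is satisfied regardless of the global topology.

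With the condition verified at $i$, I would invoke the sufficiency direction of Theorem~\ref{theorem_detect2}(a), specialized to this single node, to conclude that $i$ detects every node in the network that behaves against the averaging in Algorithm~1. Concretely, for any misbehaving node $h$, node $i$ directly receives $h$'s information set $\Phi_h[k]$ (since $h \in \mathcal{N}_i^-$), so it knows the true inputs and can reconstruct $\lambda_h'[k+1]$ and $\gamma_h'[k+1]$ exactly as in the sufficiency argument of the theorem; any manipulation of values or detection information in $\Phi_h[k]$ is then caught by Steps~2--4 of Algorithm~3. The analogous statement for Algorithm~2 follows the same way, using that $i$'s direct access to $\Phi_h[k]$ makes the sharing function in Assumption~\ref{broadcast} unnecessary for $i$'s own detection.

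The argument is essentially a direct specialization, so there is no serious obstacle; the only point requiring mild care is to note that the full access node $i$ need \emph{not} be assumed normal. The claim concerns what a \emph{normal} full access node achieves, so I would simply restrict attention to the case where $i \in \mathcal{N}$, in which case $i$ correctly executes the detection procedure and the above reasoning applies verbatim. One should also observe that detecting a misbehaving node requires only that the misbehavior be visible in $\Phi_h[k]$, which it is whenever $h$ deviates from the protocol of Algorithm~1; honest-acting adversaries in $\mathcal{N}_0 \setminus \mathcal{N}$ are by construction not targets of detection, consistent with the statement's qualifier ``behaving against the averaging.''
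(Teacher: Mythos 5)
Your proposal is correct and takes essentially the same route as the paper: the paper obtains this corollary as a direct specialization of Theorem~\ref{theorem_detect2}, using exactly your key observation that a full access node has every other node as a direct in-neighbor, so every node is detectable via the first bullet of Definition~\ref{detectable} and the sufficiency argument (the direct in-neighbor case) applies verbatim. Your explicit verification of the three conditions in Assumption~\ref{condition2} at the full access node, and the remark that direct access to all information sets renders the sharing function of Assumption~\ref{broadcast} unnecessary for Algorithm~2, simply elaborate the paper's one-line derivation.
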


As an example, the 5-node undirected network in Fig.~\ref{small_graph}(a) could tolerate two malicious nodes when the conditions for $1$-local are met except for the full access node 1. In the same graph, if only node 1 becomes malicious and the conditions for $1$-local are also met for other nodes, then RAC is still guaranteed. As another example, in the 8-node directed network in Fig.~\ref{8node}, normal nodes using Algorithm~1 with Detection Algorithm~3 can achieve resilient average consensus even in the presence of 5 malicious nodes. More details are discussed in the numerical examples.

The following corollary states the maximum tolerable number of malicious nodes in a network applying our algorithms. It can be directly proved from Theorem~\ref{theorem_detect2}.

\begin{corollary}\label{completegraph2}
	In the complete graph $\mathcal{K}_n$, normal nodes using Algorithm~1 with Algorithm~3 (or Algorithm~2) achieve resilient average consensus in the presence of $f\leq n-2$ malicious nodes in the network.
\end{corollary}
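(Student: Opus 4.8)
The plan is to reduce the claim to verifying, for $\mathcal{G}=\mathcal{K}_n$ with $f\le n-2$, the two hypotheses of Theorem~\ref{theorem_detect2} (for Algorithm~3) and of Proposition~\ref{theorem_share} (for Algorithm~2): the detection condition~(a) and the connectivity condition~(b). Since the corollary states that the result follows directly, the entire task is to confirm that a complete graph satisfies these conditions under the stated bound and then to invoke the theorems.

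First I would dispatch the detection condition. In $\mathcal{K}_n$ we have $\mathcal{N}_i^-=\mathcal{N}_i^+=\mathcal{V}\setminus\{i\}$ for every $i$, so every node other than $i$ is already a direct in-neighbor of $i$. Consequently, in Assumption~\ref{condition2}, each two-hop in-neighbor $h$, each out-neighbor $q$, and each out-neighbor $l$ of an in-neighbor $j$ is a one-hop in-neighbor of $i$, hence detectable by the \emph{first} branch of Definition~\ref{detectable}, with no appeal to the $2f+1$ two-hop paths. Thus Assumption~\ref{condition2} holds for $\mathcal{K}_n$ for every $f$, and Theorem~\ref{theorem_detect2}(a) gives full detection. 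The analogue for Algorithm~2 is even simpler: any two nodes of $\mathcal{K}_n$ share $n-2$ common neighbors, which is at least $f-1$ whenever $f\le n-1$, so the detection hypothesis of Proposition~\ref{theorem_share}(a) also holds.

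The binding constraint comes only from the connectivity condition~(b). I would note that in $\mathcal{K}_n$ the $f$-local and $f$-total models coincide, since a normal node's in-neighbor set is all of $\mathcal{V}\setminus\{i\}$, so $|\mathcal{N}_i^-\cap\mathcal{A}|\le f$ is the same as $|\mathcal{A}|\le f$. Removing any admissible adversary set of size at most $f\le n-2$ then leaves the normal network $\mathcal{K}_{n-f}$ with $n-f\ge 2$ nodes, which is strongly connected; equivalently $\mathcal{K}_n$ is $(f+1)$-strongly connected (and, in the undirected case, $(f+1)$-connected, as the vertex connectivity of $\mathcal{K}_n$ is $n-1$). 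Applying Theorem~\ref{theorem_detect2}(b) --- respectively Proposition~\ref{theorem_share}(b), which routes through Proposition~\ref{thm_consensus} --- then yields resilient average consensus and completes the argument.

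The point requiring the most care, and the one I would flag explicitly, is why the tolerable number of adversaries is as large as $n-2$ rather than being throttled by the minimum-in-degree bound $2f+1$ of Lemma~\ref{directed_minimum_degree}; for $\mathcal{K}_n$ that bound would naively suggest $f\le (n-2)/2$. The resolution is that Lemma~\ref{directed_minimum_degree} is stated for \emph{incomplete} digraphs and relies on the two-hop-path branch of detectability, whereas in the complete graph detection is always available through the direct-in-neighbor branch. Hence the $2f+1$-path requirement is vacuous here, and the only genuine limitation is the connectivity needed to keep the normal network strongly connected, which yields exactly $f\le n-2$.
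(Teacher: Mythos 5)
Your proposal is correct and takes essentially the same route as the paper: the paper's entire proof is the remark that the corollary ``can be directly proved from Theorem~\ref{theorem_detect2},'' and your verification --- that in $\mathcal{K}_n$ every node is a direct in-neighbor of every other node, so Assumption~\ref{condition2} holds via the first branch of Definition~\ref{detectable} (and the $f-1$ common-neighbor condition of Proposition~\ref{theorem_share} holds for Algorithm~2), while removing any $f\le n-2$ adversaries leaves a strongly connected $\mathcal{K}_{n-f}$ --- is exactly the routine check the paper omits. Your observation that Lemma~\ref{directed_minimum_degree} imposes no constraint here is also consistent with the paper, which explicitly restricts that lemma to incomplete digraphs and separately confirms that $\mathcal{K}_n$ satisfies the condition for Algorithm~3.
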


\begin{figure}[t]
	\centering
	
	\subfigure[]{
		\includegraphics[width=2.0cm]{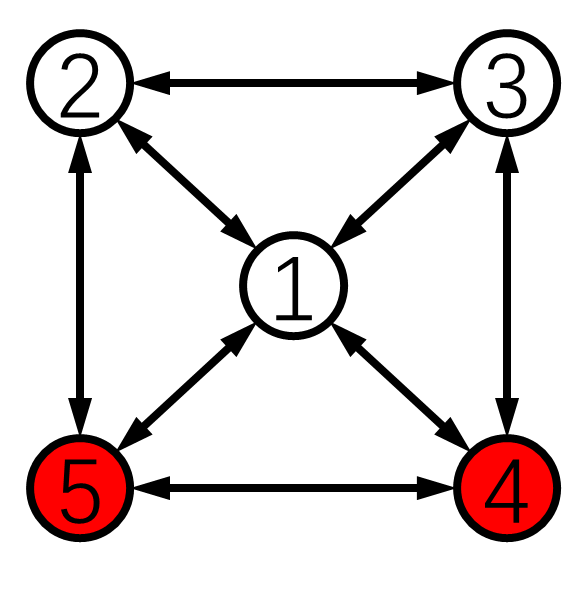}
	}
	\quad
	\subfigure[]{
		\includegraphics[width=3.25cm]{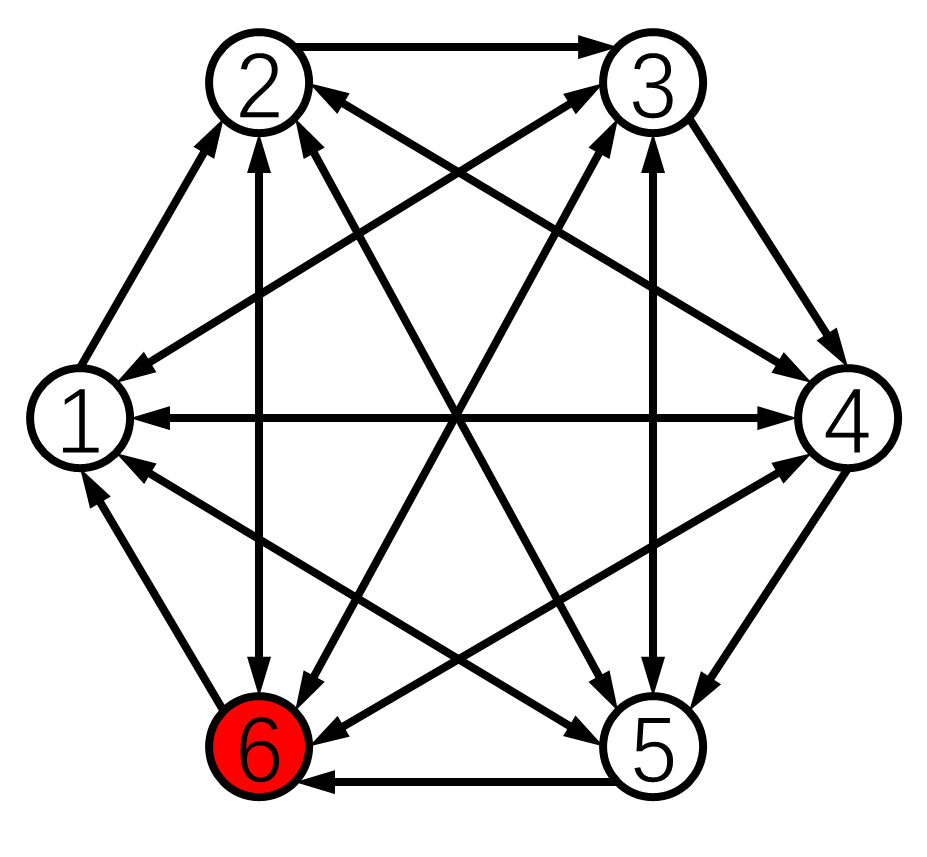}
	}
	
	\vspace{-3pt}
	\caption{Small-scale graphs satisfying the condition for Algorithm~3 under the $1$-local model (except for the full access node 1 in (a)).}\label{small_graph}
\end{figure}

\begin{remark}
	From above, we see that it is relatively easy to check whether a graph meets the condition in Assumption~\ref{condition2} since there is no combinatorial process in the checking. In particular, the verification of the detectable condition in Definition~\ref{detectable} is very simple. Further, the checking on the condition in Assumption~\ref{condition2} requires less than $n(n-1)$ times of such verification (usually much less for sparse graphs). In contrast, the verification of robustness of a graph in resilient consensus works (e.g., \cite{leblanc2013resilient,dibaji2018resilient,ishii2022overview}) involves combinatorial processes and is computationally NP-hard.
	Moreover, we have proved a much simpler condition for undirected networks using our algorithm, which is suitable for the deployment in large-scale networks. Besides, the proposed systematic way for constructing the desirable large-scale graphs can also facilitate the deployment of our algorithm in various applications.
\end{remark}

\subsection{Discussions and Comparisons with Related Works}\label{discuss22}

We discuss the differences between the resilient consensus (RC) algorithm from \cite{yuan2021secure} and RAC Algorithm~3 for directed networks.
In \cite{yuan2021secure}, normal nodes achieve resilient consensus by monitoring their in-neighbors. However, the situation becomes much more complex for Algorithm~3.
Aside from the detection of in-neighbors, node $i$ should also be able to detect each out-neighbor. This is because for solving the RAC problem, node $i$ should not send its $y$ and $z$ values to the malicious out-neighbor(s) so that the normal network can accurately preserve the ``mass'' of normal nodes only and achieve averaging as shown in Theorem~\ref{theorem_detect2}. The detection condition for Algorithm~3 hence requires denser graphs than the one for the RC algorithm \cite{yuan2021secure}.
Moreover, notice that the necessary condition for resilient average consensus is the strong connectivity of the graph. It requires each node to have at least one out-going edge. In contrast, the necessary condition for resilient consensus is that there is at least one rooted spanning tree in the graph. Therefore, the necessary condition for Algorithm~3 is stricter than the one for the RC algorithm \cite{yuan2021secure}.

The recent work \cite{dibaji2019resilient} proposed a certified propagation algorithm (CPA)-type broadcast and retrieval approach for the RAC problem.
There, each normal node broadcasts its initial value to all the nodes in the network through relaying by neighbors (i.e., the flooding technique). Then, the normal node confirms another node's initial value if it receives more than $f$ copies of the value of the same node, which is similar to the CPA approach \cite{tseng2015broadcast}. Lastly, the normal node converges to the average of values from the nodes which it has confirmed. 

We must note that this kind of approaches for each node to verify and store the initial values of all the normal agents become infeasible in large-scale networks, as it consumes intensive storage and computation for each single node to monitor the whole network. 
Compared to these approaches, our iterative detection algorithm is more efficient, especially in large-scale networks. Specifically, the storage needed on each node for our algorithm is modest as each node stores only local information of its in-neighbors within two hops.

It is challenging for our algorithms, as well as any other algorithms \cite{fagiolini2009dynamic,pasqualetti2012consensus, zhao2018resilient,hadjicostis2023identification} to identify adversarial nodes that adopt extreme initial values but behave according to the proposed algorithms as if they were normal nodes. Clearly, such nodes are indistinguishable from normal nodes with extreme initial values.
To mitigate the impact of such adversary nodes, we can set a safety interval $[y_{\min}, y_{\max}]$ (recall that $y_i[0]=x_i[0]$) for normal nodes so that neighbors taking initial values outside this interval are considered malicious \cite{yuan2021secure}.

We conduct some comparisons between Algorithms~2 and 3 for the case of undirected networks. Recall that Algorithm~2 is for the $f$-total model while Algorithm~3 is for the $f$-local model.
We first note that the $f$-local model contains the $f$-total model and is more adversarial in the sense that more than $f$ malicious agents in total may be in the entire network under the $f$-local model.
The reason is that if the graph condition for Algorithm~2 is satisfied under the $f$-local model, then it cannot guarantee that there is a normal neighbor of any pair of adjacent malicious nodes (Lemma~\ref{lemma_share}). Here is a simple counter example. Consider the 5-node network in Fig.~\ref{small_graph}(a) with malicious node set $\mathcal{A}=\{1,4,5\}$. It satisfies the $f-1$ common neighbors condition for Algorithm~2 under the $2$-local model. Yet, it does not meet the condition in Lemma~\ref{lemma_share}. 
However, we must note that this phenomenon is not in presence for Algorithm~3 since the condition in Theorem~\ref{detect2_undirected} has required the necessary condition for each node to independently detect the malicious neighbors.

\section{Numerical Examples}\label{sec_simulation}

We present numerical examples to verify the efficacy of RAC Algorithm~1 with Detection Algorithms~2 and~3.

\subsection{Simulations with Directed Networks}

In this part, we provide the simulations of Algorithm~3 in three directed networks of different scales.

\textit{1) Small Directed Network:} Consider the 6-node network in Fig.~\ref{small_graph}(b). It meets the graph condition (Assumption~\ref{condition2}) for Algorithm~3 under the 1-local model. Moreover, it is 2-strongly connected (i.e., the remaining graph is strongly connected after the removal of the 1-local malicious node set). Hence, it meets the requirements in Theorem~\ref{theorem_detect2}.

\begin{figure}[t]
\centering

\subfigure[\scriptsize{Under attacks with less edges.}]{
	\includegraphics[width=3.4in,height=1.4in]{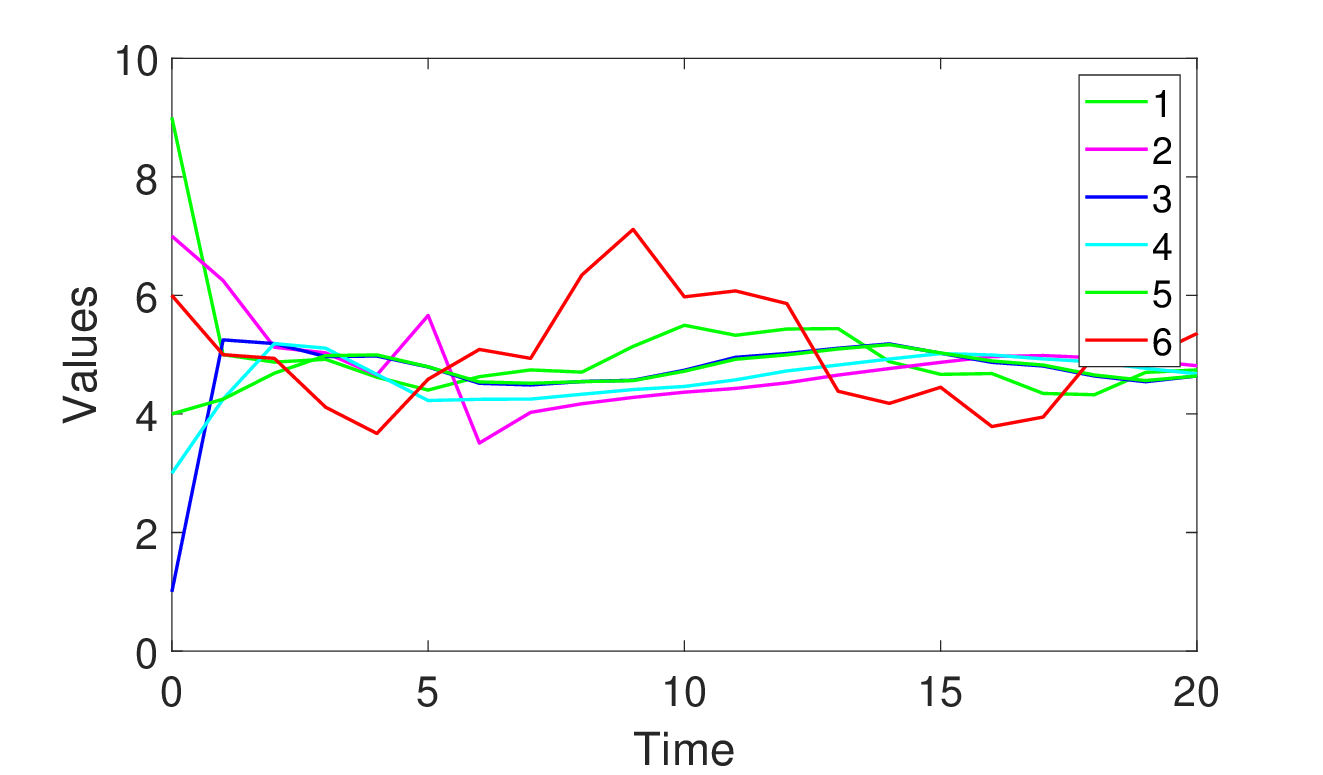}
}

\vspace{-3pt}

\subfigure[\scriptsize{Under attacks.}]{
	\includegraphics[width=3.4in,height=1.4in]{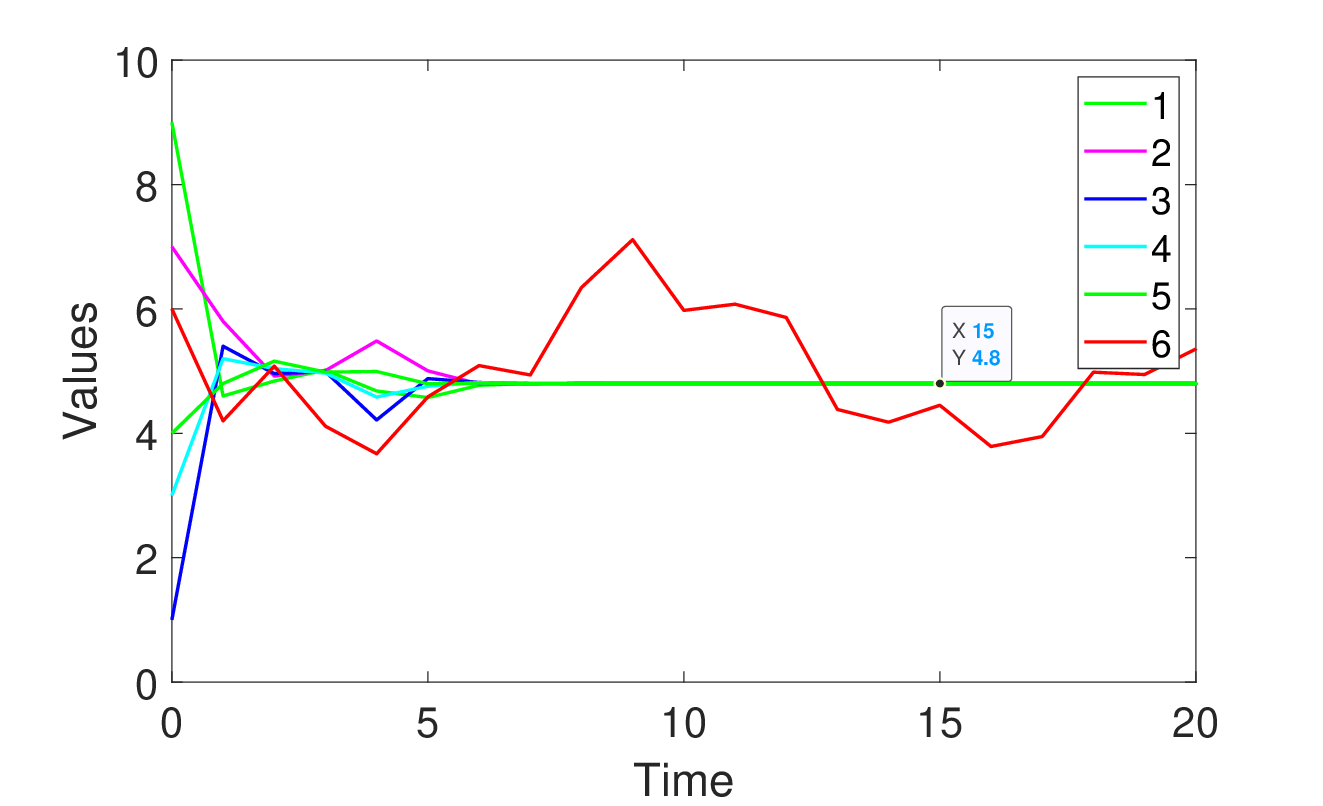}
}

\vspace{-3pt}
\caption{Time responses of Algorithm~3 in the 6-node network in Fig.~\ref{small_graph}(b).}
\label{A3-6node}
\end{figure}

Set the initial values $x[0]=[9\ 7\ 1\ 3\ 4\ 6]^T$ and the adversary node set $\mathcal{A}=\{6\}$. Malicious node 6 is indicated in red in Fig.~\ref{small_graph}(b).
First, we show that the detection condition for Algorithm~3 is critical for the success of our RAC algorithm. Suppose that three undirected edges $(1,4)$, $(2,5)$, and $(3,6)$ are removed from the network, i.e., the condition for Algorithm~3 is not satisfied. The simulation under attacks for the above case with less edges is displayed in Fig.~\ref{A3-6node}(a). Until time $k=2$, malicious node 6 follows the averaging in Algorithm~1 to avoid being detected. Then it manipulates its $y$ values through changing the past values of node 2 while not manipulating other entries of its information set. We can see in Fig.~\ref{A3-6node}(a) that resilient average consensus is not achieved by Algorithm~3 with less edges. This is because only nodes 2 and 4 can detect the above misbehavior of node 6. Normal nodes 1, 3 and 5 are misled by node 6 due to the lack of necessary graph structure to obtain the correct value of node 2.

Next, we apply Algorithm~3 in the 6-node network as presented in Fig.~\ref{small_graph}(b), where the condition for Algorithm~3 is met. Consider the same initial states and the same attacks for the network.
The simulation result is presented 
in Fig.~\ref{A3-6node}(b). Malicious node 6 launches attacks as before, however, this misbehavior is soon detected by its normal out-neighbors. The normal nodes then compensate the erroneous effects received from node 6 and start to form consensus among normal neighbors only.
Lastly, the normal nodes are able to reach the average of their initial values $\overline{X}_{\mathcal{N}}= \frac{\sum_{i\in\mathcal{N}} x_i[0]} {|\mathcal{N}|} = 4.8$, and resilient average consensus is reached using Algorithm~3.

\begin{figure}[t]
\centering

\subfigure[\scriptsize{No attack.}]{
	\includegraphics[width=3.4in,height=1.4in]{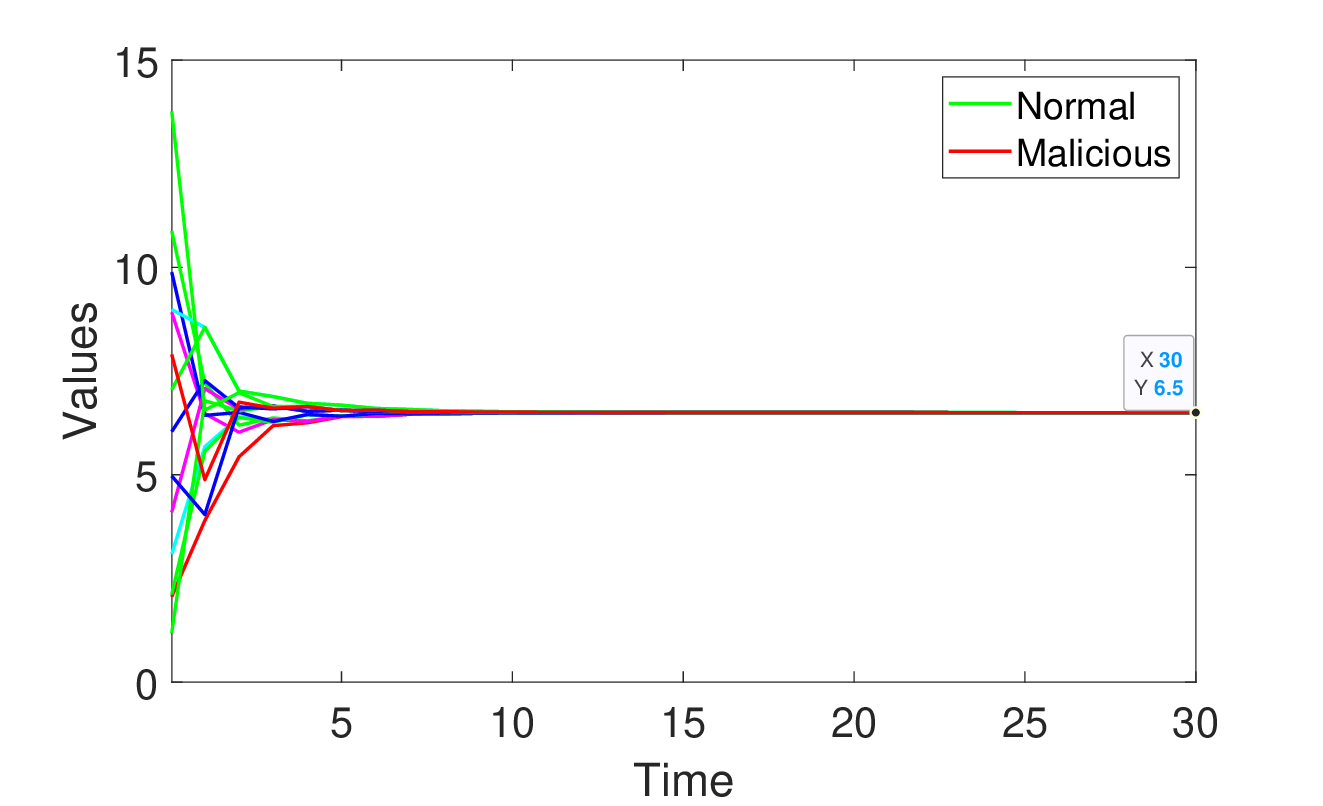}
}

\vspace{-3pt}

\subfigure[\scriptsize{Under attacks.}]{
	\includegraphics[width=3.4in,height=1.4in]{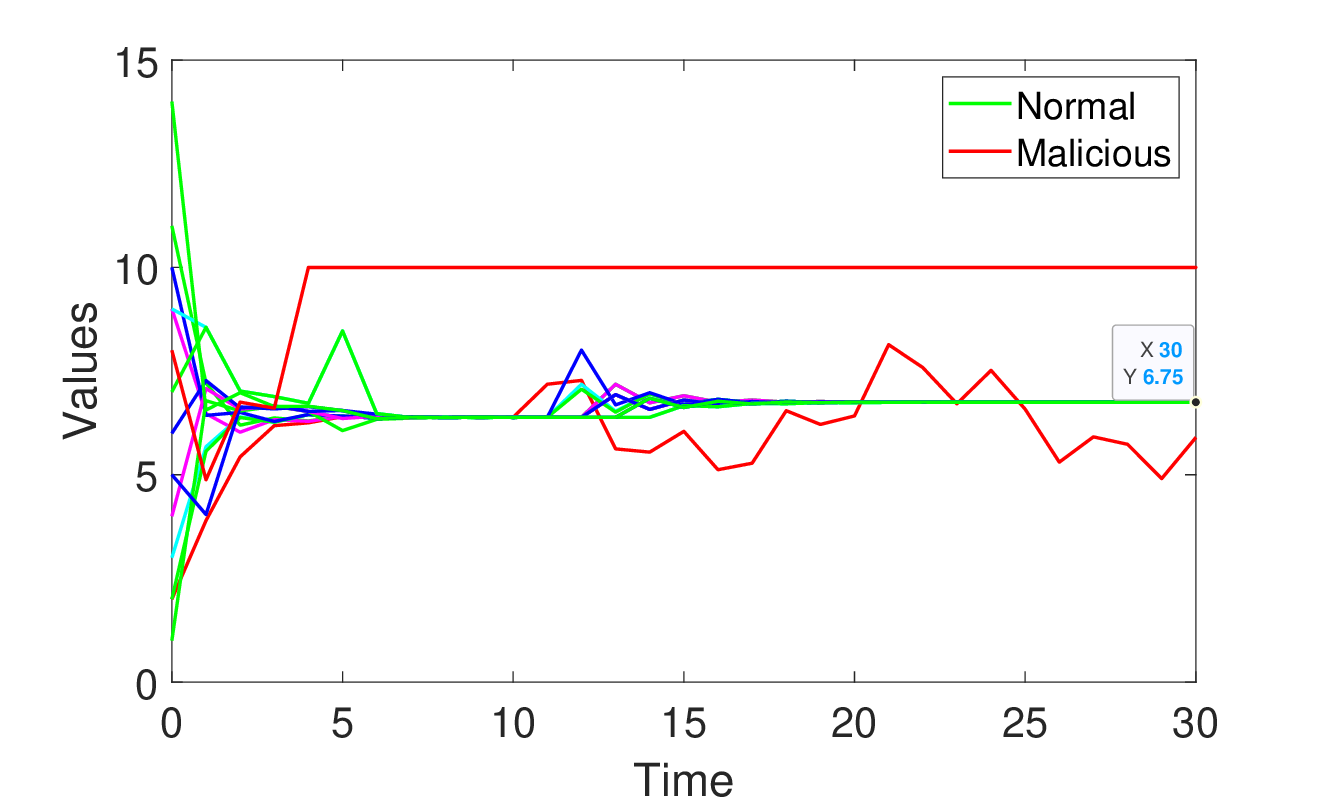}
}

\vspace{-3pt}

\caption{Time responses of Algorithm~3 in the 14-node network in Fig.~\ref{large_graph}(b).}
\label{A3-14node}
\end{figure}

\textit{2) Medium-scale Directed Network:} Next, consider the 14-node network in Fig.~\ref{large_graph}(b) constructed using the method in Section~\ref{construction}. It satisfies the condition for Algorithm~3 under the 1-local model and is 2-strongly connected. 

Let the initial values $x[0]=[11\ 2\ 9\ 3\ 2\ 10\ 1\ 4\ 6\ 9\ 7\ 5\ 14\ 8]^T$ and $\mathcal{A}=\{2, 14\}$ in Fig.~\ref{large_graph}(b). The time responses of nodes under no attack are shown in Fig.~\ref{A3-14node}(a), where all nodes using Algorithm~3 reach the average of their initial values $\overline{X}= \frac{\sum x[0]} {n} = 6.5$.
Here, the lines not in red represent the values of normal nodes.
Then, the time responses of nodes under attacks are displayed in Fig.~\ref{A3-14node}(b). There, until time $k=3$, malicious nodes~2 and~14 pretend to be normal by following the averaging. Then node 14 changes its own value to a fixed value and is detected by its normal out-neighbors at the next time step. In the meantime, node~2 keeps concealing itself. At time $k=11$, node 2 and normal nodes almost reach the average of their initial values (around 6.385). However, node 2 starts to manipulate its $y$ value through changing the past values of its in-neighbors in its information set. Such an attack is also quickly detected and normal nodes remove the effects received from node 2 until then. Finally, the normal nodes reach the average of their initial values $\overline{X}_{\mathcal{N}}=  6.75$, and resilient average consensus is attained.
Moreover, the convergence of Algorithm~3 is quick even in the presence of malicious attacks.

\begin{figure}[t]
\centering

\includegraphics[width = 2.9cm ]{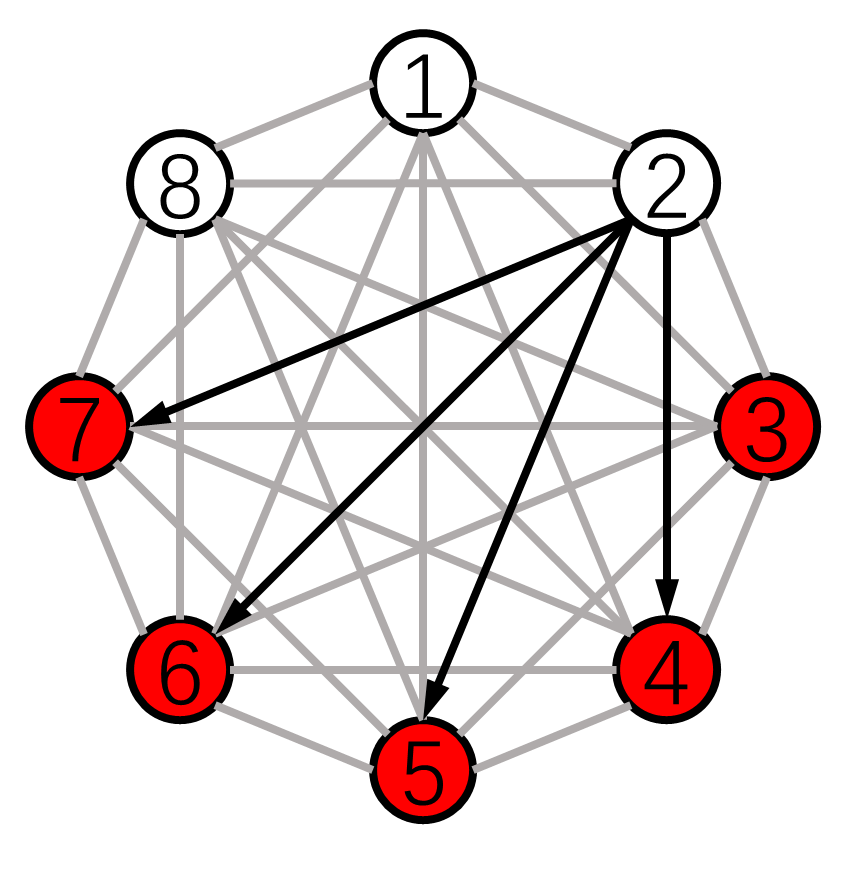}

\vspace{-3pt}

\caption{The 8-node network satisfying the condition for Algorithm~3 under the $1$-local model.}\label{8node}
\end{figure}

\begin{figure}[t]
\centering

\includegraphics[width=3.4in,height=1.4in]{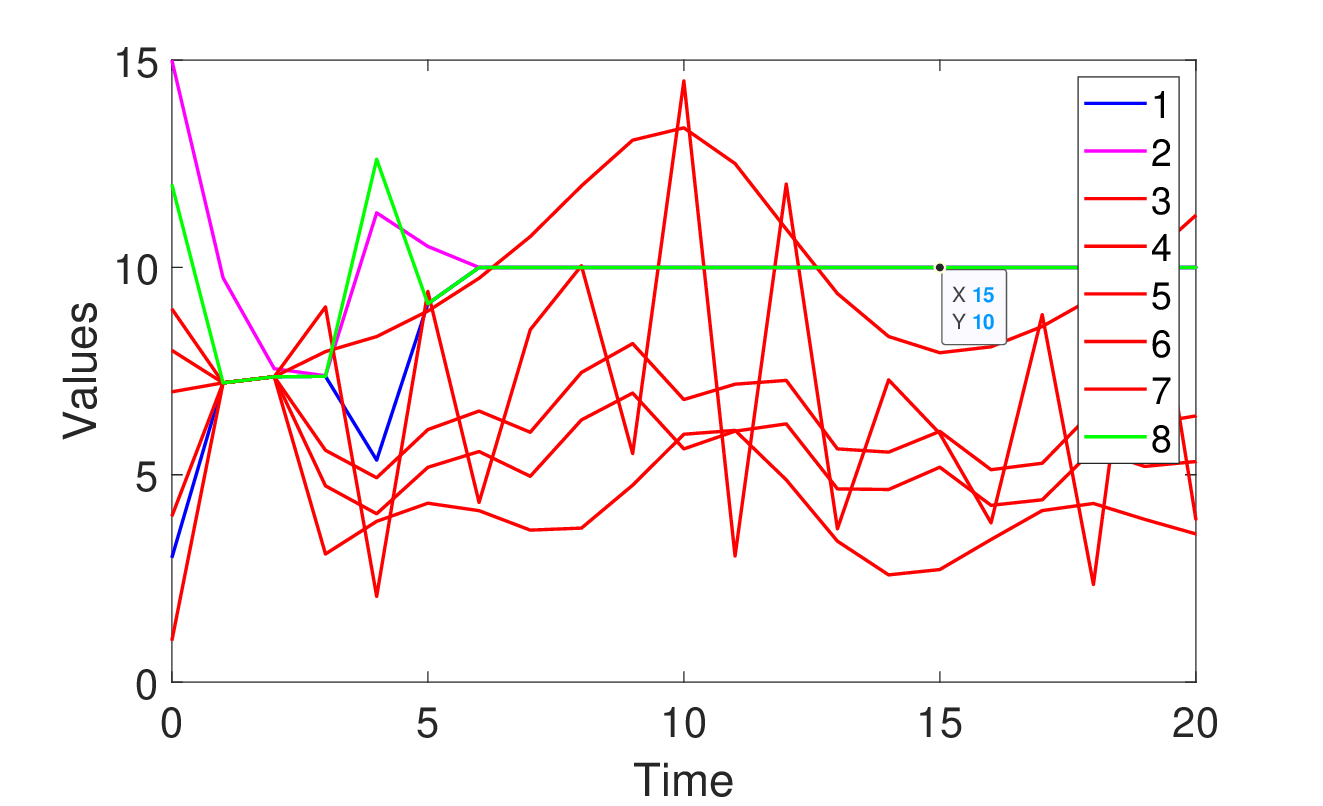}

\vspace{-3pt}
\caption{Time responses of Algorithm~3 in the 8-node network in Fig.~\ref{8node}.}
\label{A3-8node}
\end{figure}

\textit{3) Over Half of the Nodes Turn Malicious:} We conduct the simulation of Algorithm~3 under an extremely adversarial case, where over half of the nodes in the network turn malicious. Consider the 8-node network in Fig.~\ref{8node} with $\mathcal{A}=\{3, 4, 5, 6, 7\}$. It is almost complete except that there are 4 directed edges from node 2. Moreover, it satisfies the condition for Algorithm~3 under the 1-local model for non-full access node 2.

Set the initial values $x[0]=[3\ 15\ 9\ 8\ 4\ 7\ 1\ 12]^T$. The simulation result is presented in Fig.~\ref{A3-8node}. All malicious nodes simultaneously launch attacks at time $k=3$ by manipulating their values arbitrarily. However, these attacks are soon detected.
In Fig.~\ref{A3-8node}, the normal nodes eventually reach the average of their initial values $\overline{X}_{\mathcal{N}}=  10$. Therefore, we can conclude that resilient average consensus is still guaranteed using Algorithm~3 despite the erroneous effects from 5 malicious nodes.

\subsection{Simulations with Large-scale Undirected Networks}\label{sec_sim_undirected}

\textit{1) Simulation of Algorithm~2:}
Here, we show the effectiveness of Algorithm~2 by conducting a simulation in the 5-node undirected network in Fig.~\ref{small_graph}(a) with initial states $x[0]=[8\ 6\ 1\ 3\ 9]^T$.
It is 3-connected and with at least one common neighbor for every pair of neighbors.
Given these properties, Proposition~\ref{theorem_share} indicates that resilient average consensus can be achieved using Algorithm~2 under the 2-total malicious model. 
Let the adversary set $\mathcal{A}=\{4, 5\}$.
The simulation result under attacks is shown in Fig.~\ref{A2-5node}. Malicious node 5 first attacks other agents by transmitting arbitrary values at time $k=4$ and it is soon detected by its normal neighbors. At time $k=13$, nodes 1, 2, 3 and 4 almost reach the average of their initial values (i.e., 4.5). However, node 4 starts to manipulate its own value. Such an attack is also quickly detected. Then the normal nodes reach the average of their initial values $\overline{X}_{\mathcal{N}}= 5$.

\textit{2) Simulation of Algorithm~3:} In this part, we carry out the simulation of Algorithm~3 in a large-scale network, which is constructed by the method proposed in Section~\ref{construction}. Specifically, we consider the 30-node network in Fig.~\ref{10layer}. It has a 10-layer structure satisfying the condition for Algorithm~3 under the 1-local model. The malicious nodes are indicated in red with $\mathcal{A}=\{3, 6, 15, 18, 27, 30\}$.

\begin{figure}[t]
\centering

\includegraphics[width=3.4in,height=1.4in]{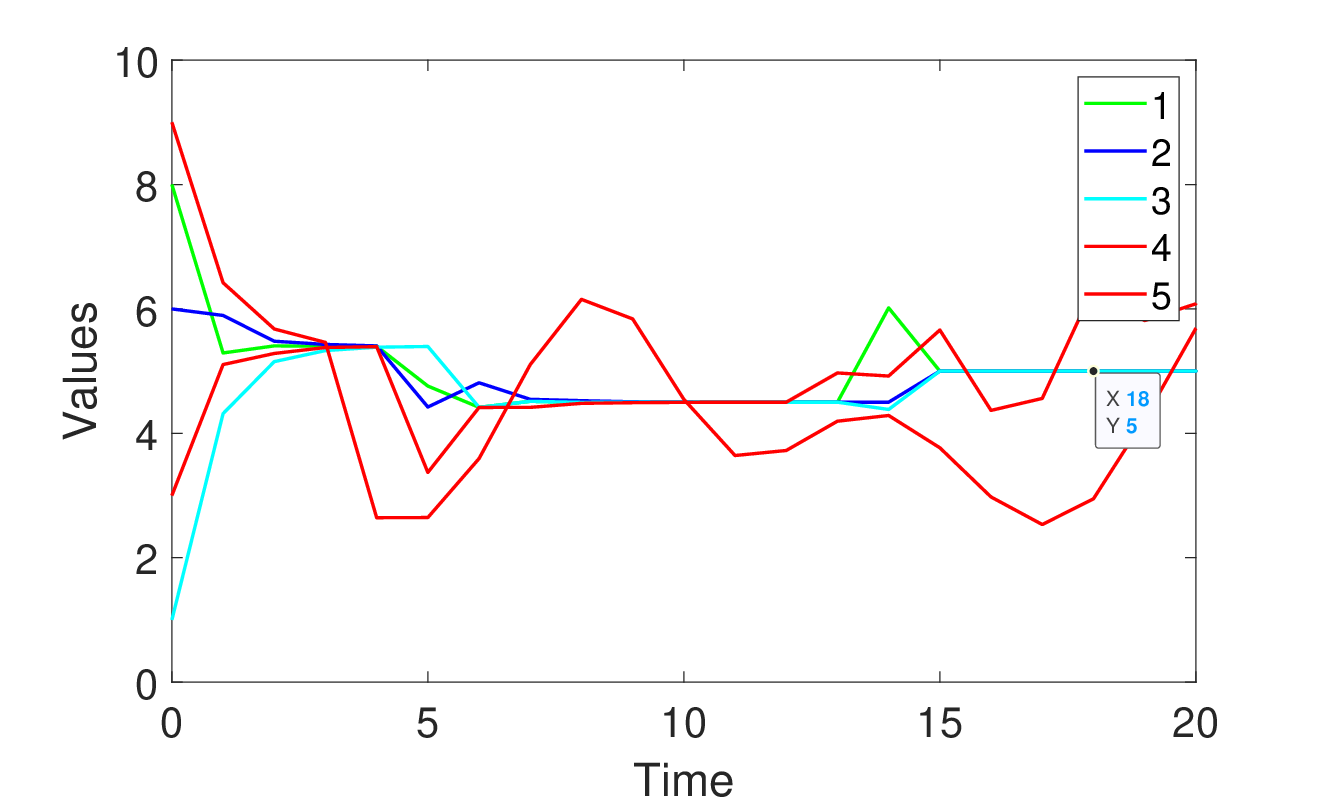}

\vspace{-3pt}
\caption{Time responses of Algorithm~2 in the 5-node network in Fig.~\ref{small_graph}(a).}
\label{A2-5node}
\end{figure}

\begin{figure}[t]
\centering

\includegraphics[width = 0.4\textwidth ]{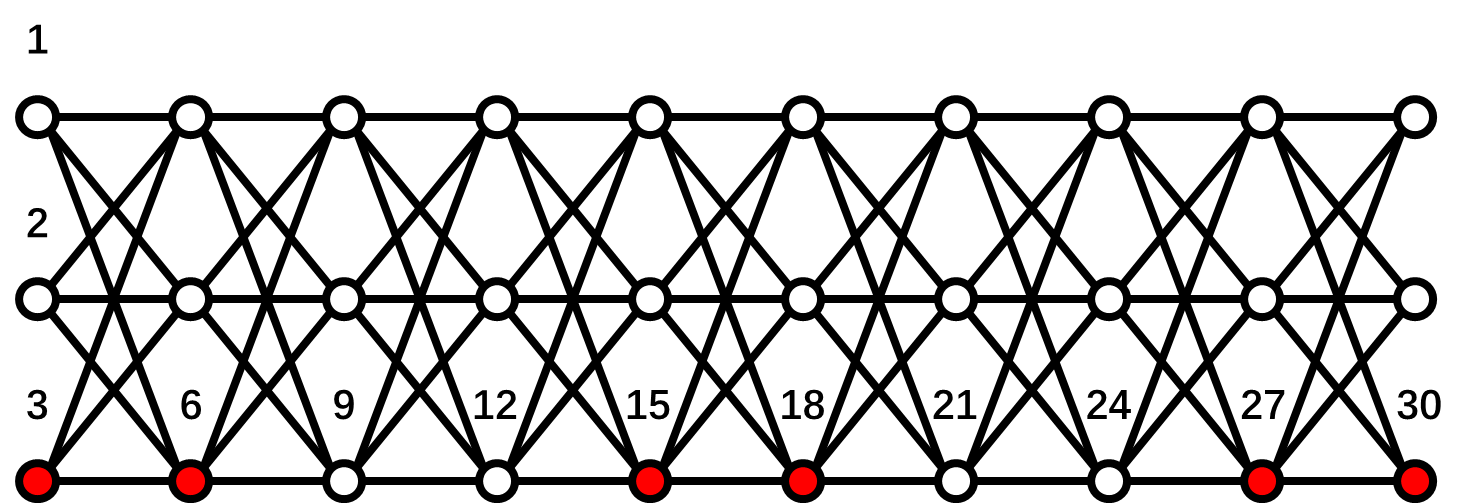}

\vspace{-3pt}
\caption{A large-scale network satisfying the condition for Algorithm~3 under the 1-local model.}\label{10layer}
\end{figure}

Let the initial values $x[0]=[8\ 7\ 5\ 3\ 2\ 11\  1\ 4\  6\ 9\ $ $ 10\ 12\ 11\ 13\  14\ 3\ 5\ 2\ 8\ 7\ 5\ 3\ 2\ 11\ 1\ 4\ 6\ 9\ 10\ 12]^T$.
The simulation results of Algorithm~3 without and with attacks are shown in Figs.~\ref{A3-10layer}(a) and (b), respectively.
One can see in Fig.~\ref{A3-10layer}(a) that all nodes achieve average consensus $\overline{X}= 6.8$ using Algorithm~3 although the convergence is slow due to the large network size.
As for the results of nodes under attacks, it shows in Fig.~\ref{A3-10layer}(b) that at time $k=9$, all the 6 malicious nodes start to manipulate their values through cooperating with their malicious neighbors and changing the past values of each other in their information sets. Such attacks are soon detected by their normal neighbors. Thereafter, the normal nodes reach the average of their initial values $\overline{X}_{\mathcal{N}}=  6.4166$. The RAC problem is solved by Algorithm~3 in the presence of 6 malicious nodes. Note that Algorithm~3 can still guarantee resilient average consensus if any one of the nodes become malicious in each one of the 6 layers containing malicious nodes currently. This is because the malicious nodes also satisfy the 1-local model in this case. We finally emphasize that none of the methods in \cite{zheng2021accurate,hadjicostis2023trustworthy,hadjicostis2023identification} can handle the above case of neighboring malicious nodes.

\begin{figure}[t]
\centering

\subfigure[\scriptsize{No attack.}]{
	\includegraphics[width=3.4in,height=1.4in]{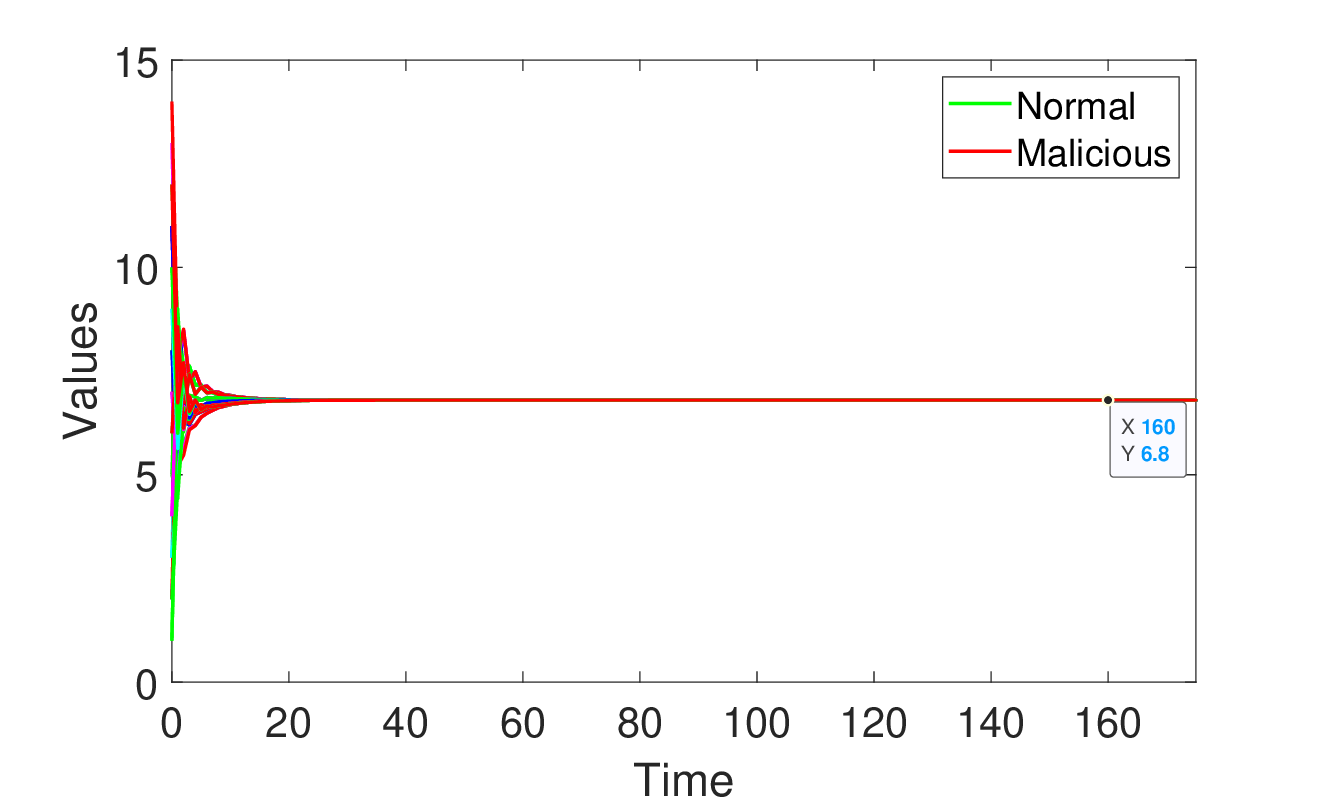}
}

\vspace{-3pt}

\subfigure[\scriptsize{Under attacks.}]{
	\includegraphics[width=3.4in,height=1.4in]{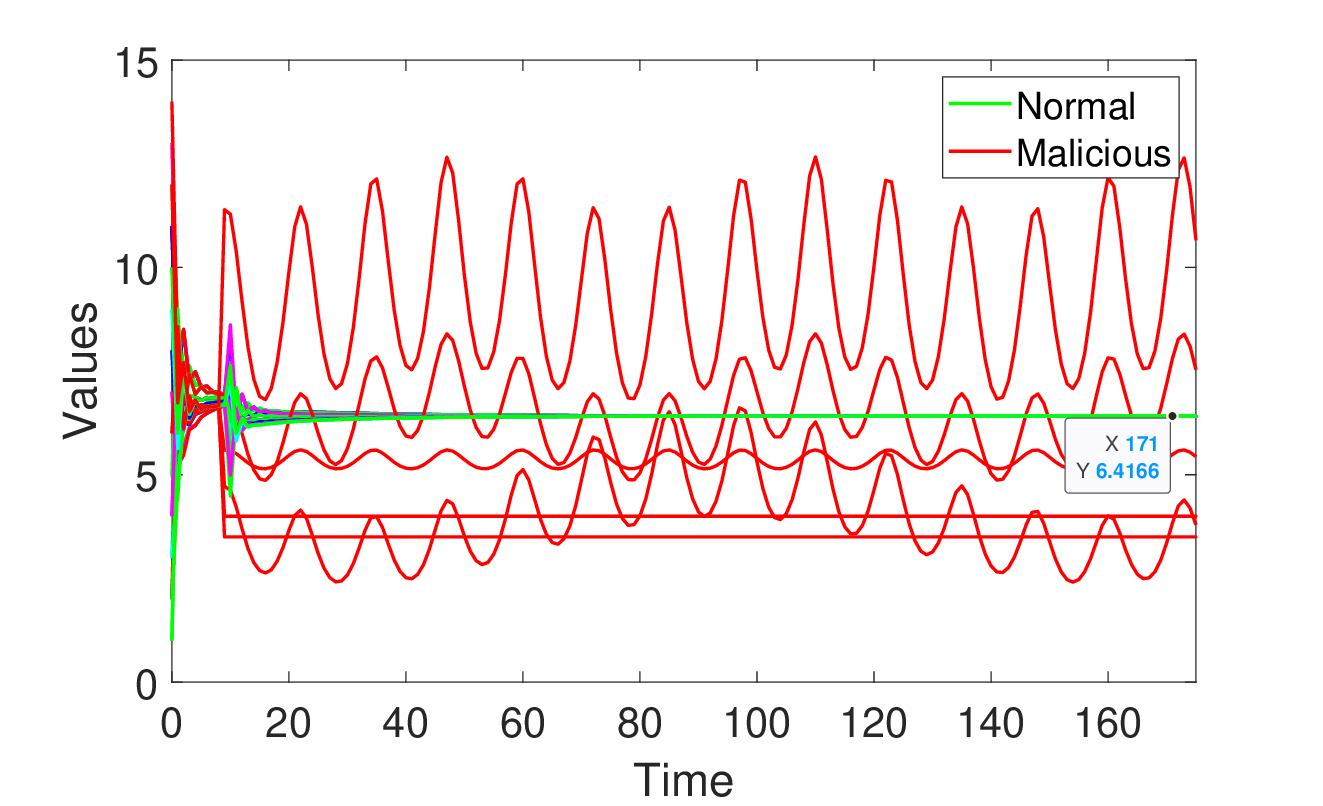}
}

\vspace{-3pt}
\caption{Time responses of Algorithm~3 in the 10-layer network in Fig.~\ref{10layer}.}
\label{A3-10layer}
\end{figure}

\section{Conclusion}
In this paper, we have investigated the problem of resilient average consensus in the presence of adversaries.
We have proposed a distributed iterative detection and averaging algorithm for normal agents to achieve resilient average consensus in general directed topologies. For the detection part, we have proposed two distributed algorithms and the second one can achieve fully distributed detection of malicious agents. For the averaging part, it can precisely preserve the sum of the initial values of normal agents.
Moreover, we have fully characterized the network requirement for the algorithms to successfully achieve resilient average consensus.
Compared to the existing detection approaches, our method is the only one that can handle the case of neighboring malicious agents. Besides, we have solved the resilient average consensus in directed networks, whereas the existing detection approaches studied undirected networks only. Moreover, in comparison with the existing secure broadcast and retrieval approach \cite{dibaji2019resilient}, our algorithm can save storage as each agent keeps only the values of two-hop neighbors.
In the end, we have provided extensive numerical examples to show the effectiveness of the proposed algorithms.

In future works, we are interested in applying our algorithms to various applications of average consensus where security needs to be enhanced, e.g., the economic dispatch problem and the PageRank problem.

\appendices

\section*{Appendix}

\section*{Proof of Lemma \ref{directed_minimum_degree}}

\begin{proof}
The proof is shown in two stages. First, we show that the clique structure (see the examples in Fig.~\ref{proof_indegree}(a)) is the minimum subgraph not having any node with in-degree more than $2f$ while satisfying the condition for Algorithm~3.
Due to the $f$-local model, each node must have at least $f$ in-neighbors. It is obvious that if any node uses the majority voting structure (i.e., $2f+1$ two-hop paths) to obtain the original value of a two-hop in-neighbor or an out-neighbor, then such a node will have at least $2f+1$ in-neighbors. Consider node $i$ with $f$ in-neighbors. By the above discussion, it has undirected edges to the $f$ in-neighbors, which results in these in-neighbors being two-hop in-neighbors to each other. Thus, by the same argument, there must be undirected edges between them.
Therefore, the clique is the only structure satisfying the condition for Algorithm~3 while not having any node with in-degree more than $2f$.

Next, we show the minimum in-degree of the whole graph. Since the graph is strongly connected, there exist bi-directional edges (one undirected edge or two separate directed edges) connecting two subgraphs. For example, we take the undirected edge between nodes $i$ and $j$ in Fig.~\ref{proof_indegree}(b). Then other nodes in the right subgraph become two-hop neighbors of node $i$. By similar discussions as above, there exist undirected edges between node $i$ and all the neighbors of node $j$ (as indicated by the blue edges in the figure). As a result, node $i$ has $2f+1$ in-neighbors. Moreover, we can check that all the rest of nodes also have $2f+1$ in-neighbors to fulfill the condition for Algorithm~3. We conclude that the whole graph has the minimum in-degree no less than $2f+1$.
\end{proof}

\begin{figure}[t]
\centering


\subfigure[]{
	\includegraphics[width=3cm]{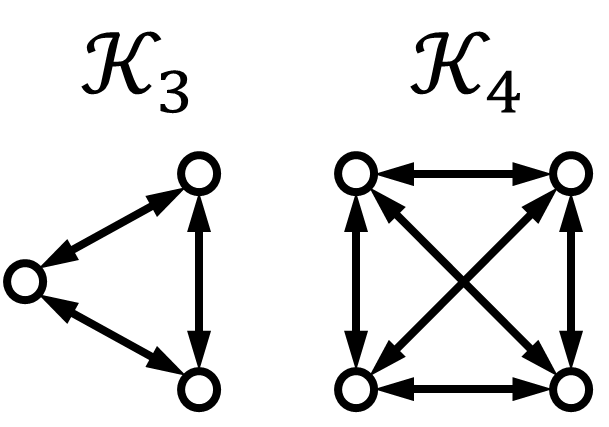}
	\label{9nodesforp2}
}
\quad
\subfigure[]{
	\includegraphics[width=3cm]{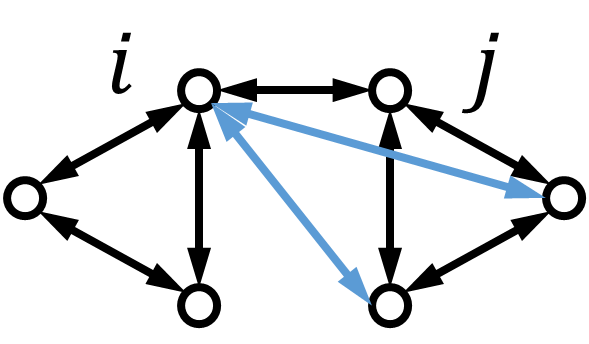}
	\label{connectcomplete}
}

\vspace{-3pt}
\caption{(a) Examples: A clique is a complete subgraph. (b) Illustration for the minimum in-degree.}
\label{proof_indegree}
\end{figure}

\section*{Proof of Proposition \ref{undirected_simple}}

\begin{proof}
We first show that for any node $i\in \mathcal{V}$, there must exist the minimum subgraph containing node $i$ as the one in Fig.~\ref{proof_connected}(a).
Recall from Lemma.~\ref{directed_minimum_degree} that an undirected graph satisfying the condition for Algorithm~3 has minimum in-degree no less than $2f+1$. In Fig.~\ref{proof_connected}(a), we set $f=1$ for illustration. The edges in blue and black represent the detectable path and the communication edge, respectively. Note that this subgraph also includes the middle nodes on the detectable path (not shown in the figure for convenience) if such path is not constructed by an undirected communication edge. It can be observed that in a minimum subgraph, after removing any node set being $f$-local, the remaining graph is connected. This means that there is at least one path connecting any two nodes in the remaining graph.

Now, consider any two minimum subgraphs with node sets $\mathcal{V}_1$ and $\mathcal{V}_2$ (see Fig.~\ref{proof_connected}(b)). There must be at least one edge between them since the whole graph is connected by assumption. There are three subcases for placing such an edge. These are between (i) $i$ and $j_1$, (ii) $i$ and $j$, (iii) $i_1$ and $j_1$. (Without loss of generality, select $j_1$ as one of $j$'s neighbors.) In cases (i) and (ii), node $j_1$ or $j$ becomes a direct neighbor of node $i$. Thus, node $j_1$ or $j$ is connected with any node in $\mathcal{V}_1$ after removing an $f$-local node set in $\mathcal{V}_1$. Since node $j_1$ or $j$ is also connected with any node in $\mathcal{V}_2$ after the removal, we can conclude that in cases (i) and (ii), any node in $\mathcal{V}_2$ is connected with any node in $\mathcal{V}_1$ after removing an $f$-local node set in the whole graph. 

In case (iii), nodes $i_1$ and $j_1$ become neighbors (see Fig.~\ref{proof_connected}(b)). There should be detectable paths between $i_1$ and $j$ and also between $i$ and $j_1$. If any of the two paths is constructed by an undirected communication edge, the result is the same as the one in case (i). So we examine the case where both paths are constructed by $2f+1$ two-hop communication paths. Suppose that node $i$ is connected to node $j_1$ through nodes $i_4$, $i_5$, and $i_6$. The three nodes become two-hop neighbors of node $j$ and there should be detectable paths to node $j$. In this case, even if we remove the $f$-local model consisting of both nodes $i_1$ and $j_1$, node $i$ and node $j$ are connected with each other, and so are the rest of the nodes in $\mathcal{V}_1$ and $\mathcal{V}_2$. Note that removing both nodes $i_1$ and $j_1$ does not violate the $f$-local model since they do not have common normal neighbors. Finally, since the malicious set is $f$-local, we conclude that the normal network induced by the normal agents in $\mathcal{N} \subseteq \mathcal{V}$ is connected.
\end{proof}

\begin{figure}[t]
\centering


\subfigure[]{
	\includegraphics[width=2cm]{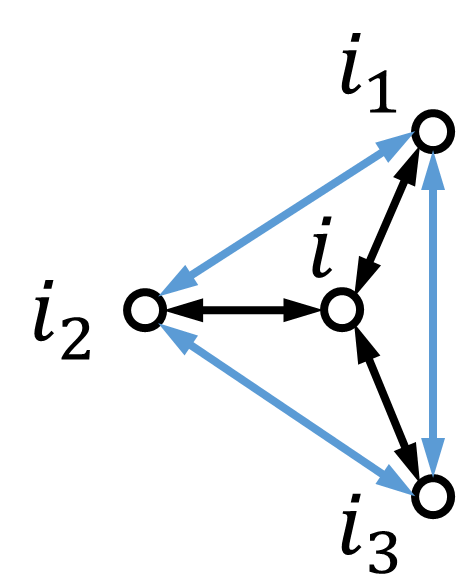}
}
\quad
\hspace{3pt}
\subfigure[]{
	\includegraphics[width=5cm]{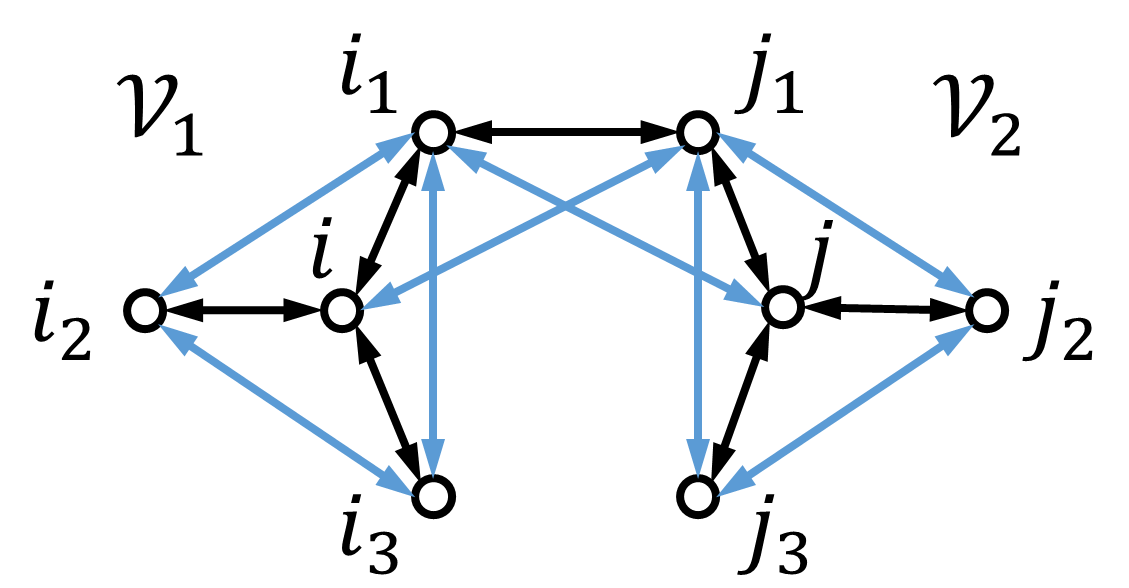}
}

\vspace{-3pt}
\caption{(a) The minimum subgraph of an undirected graph satisfying the condition for Algorithm~3. (b) Illustration for two connected subgraphs $\mathcal{V}_1$ and $\mathcal{V}_2$.}
\label{proof_connected}
\end{figure}

\section*{References}
\vspace{-5mm}

\vspace{-1cm}

\begin{IEEEbiography}[{\includegraphics[width=1in,height=1.25in,clip,keepaspectratio]{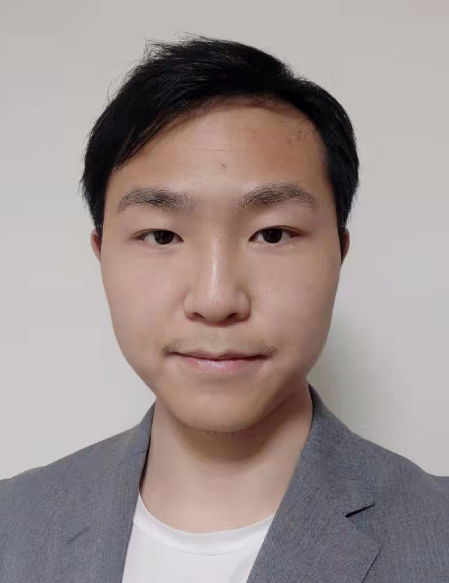}}]{Liwei Yuan} (Member) received the B.E. degree in Electrical Engineering
and Automation from Tsinghua University,
China, in 2017, and the Ph.D. degree in Computer
Science from Tokyo Institute of Technology, Japan, in
2022.
He is currently a Postdoctoral Researcher in the College 
of Electrical and Information Engineering at Hunan 
University, Changsha, China. His current
research focuses on security in multi-agent systems
and distributed algorithms.
\end{IEEEbiography}

\vspace{-1cm}

\begin{IEEEbiography}[{\includegraphics[width=1in,height=1.25in,clip,keepaspectratio]{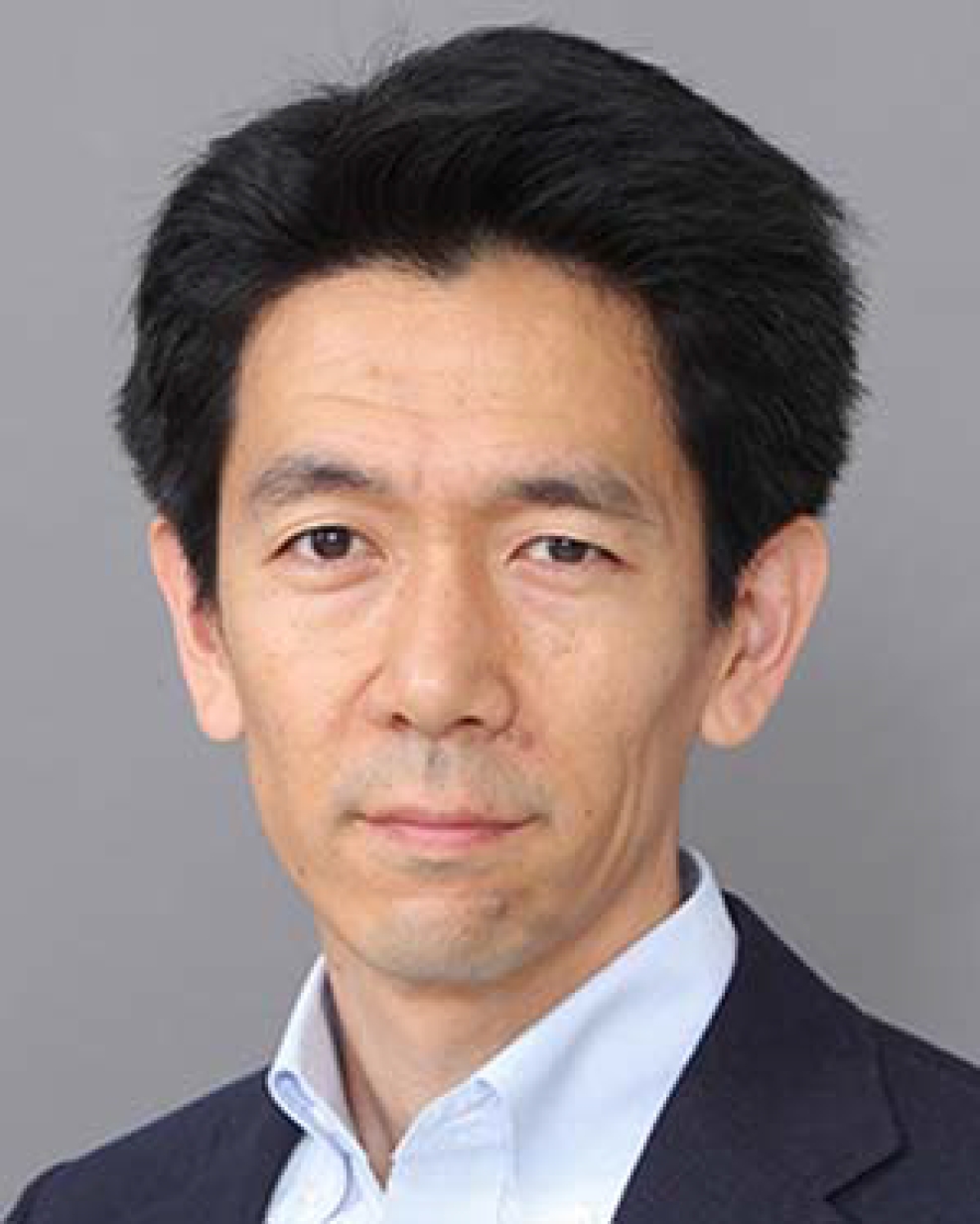}}]{Hideaki Ishii} (M'02-SM'12-F'21) received the
M.Eng.\ degree in applied systems science from
Kyoto University, Kyoto, Japan, in 1998, and the
Ph.D. degree in electrical and computer engineering
from the University of Toronto, Toronto,
ON, Canada, in 2002. He was a Postdoctoral Research
Associate at the University of Illinois at Urbana-Champaign,
Urbana, IL, USA, from 2001 to
2004, and a Research Associate at
The University of Tokyo, Tokyo, Japan, from 2004 to 2007.
He was an Associate Professor and Professor at the Department of Computer Science,
Tokyo Institute of Technology, Yokohama, Japan in 2007--2024. Currently, he is a Professor at the 
Department of Information Physics and Computing at The University of Tokyo, Tokyo, Japan.
He was a Humboldt Research Fellow at the University of Stuttgart
in 2014--2015. He has also held visiting positions at CNR-IEIIT at
the Politecnico di Torino, the Technical University of Berlin, and
the City University of Hong Kong. His research interests
include networked control systems, multiagent systems, distributed algorithms,
and cyber-security of control systems.

Dr.~Ishii has served as an Associate Editor for Automatica, 
the IEEE Control Systems Letters, the IEEE Transactions on Automatic Control, 
the IEEE Transactions on Control of Network Systems,
and the Mathematics of Control, Signals, and Systems.
He was a Vice President for the IEEE Control Systems Society (CSS) in 2022--2023.
He was the Chair of the IFAC Coordinating Committee on Systems and
Signals in 2017--2023.
He served as the IPC Chair for the IFAC World Congress 2023 held in Yokohama, Japan.
He received the IEEE Control Systems Magazine Outstanding Paper
Award in 2015. Dr.~Ishii is an IEEE Fellow.
\end{IEEEbiography}

\end{document}